\newtheorem{lemma}{Lemma}
\newcommand{\proj}[1]{|#1\rangle\!\langle #1|}
\DeclareMathOperator{\CVaR}{CVaR}
\DeclareMathOperator{\CVaRv}{CVaRv}
\DeclareMathOperator{\Var}{Var}
\DeclareMathOperator{\tr}{tr}
\begin{document}

\title{Provable bounds for noise-free expectation values computed from noisy samples}

\author{Samantha V.~Barron}
\affiliation{%
IBM Quantum, IBM T.J.~Watson Research Center
}%

\author{Daniel J.~Egger}
\affiliation{%
IBM Quantum, IBM Research Europe - Zurich
}%

\author{Elijah Pelofske}
\affiliation{%
Los Alamos National Laboratory
}%

\author{Andreas B\"artschi}
\affiliation{%
Los Alamos National Laboratory
}%

\author{Stephan Eidenbenz}
\affiliation{%
Los Alamos National Laboratory
}%

\author{Matthis Lehmkuehler}
\affiliation{%
University of Basel
}%

\author{Stefan Woerner}
\email{wor@zurich.ibm.com}
\affiliation{%
IBM Quantum, IBM Research Europe - Zurich
}%

\date{\today}

\begin{abstract}
In this paper, we explore the impact of noise on quantum computing, particularly focusing on the challenges when sampling bit strings from noisy quantum computers as well as the implications for optimization and machine learning applications. We formally quantify the sampling overhead to extract good samples from noisy quantum computers and relate it to the layer fidelity, a metric to determine the performance of noisy quantum processors. Further, we show how this allows us to use the Conditional Value at Risk of noisy samples to determine provable bounds on noise-free expectation values. We discuss how to leverage these bounds for different algorithms and demonstrate our findings through experiments on a real quantum computer involving up to 127 qubits. The results show a strong alignment with theoretical predictions. 
\end{abstract}

\maketitle

\section{Introduction \label{sec:introduction}}

Quantum computing is a new computational paradigm which promises to impact many disciplines, ranging from quantum chemistry \cite{peruzzo_2014_vqe, ollitrault_2021_dynamics}, quantum physics \cite{dimeglio2023quantum}, and material sciences \cite{barkoutsos_2021_alchemical}, to machine learning \cite{Havlicek2019, Zoufal_2019_qgan, Zoufal_2021_varqbm}, optimization \cite{farhi_2014_qaoa, Bravyi2019, egger2021warm, Sack2023}, and finance \cite{Woerner_2019_risk, yndurain_2019_quantum_finance, Stamatopoulos_2022_market_risk}. 
However, leveraging near-term quantum computers is difficult due to the noise present in the systems. 
Ultimately, this needs to be addressed by quantum error correction, which exponentially suppresses errors by encoding logical qubits in multiple physical qubits \cite{nielsen_and_chuang, lidar_brun_2013_qec}. 

In near-term devices, implementing error correction is infeasible.
We must find other ways to handle the noise.
A promising approach to bridge the gap between noisy and error-corrected quantum computing is error mitigation.
Here, we leverage multiple noisy estimates to construct a better approximation of the noise-free result.
The most prominent examples are \emph{Probabilistic Error Cancellation} (PEC) \cite{berg2023probabilistic, Piveteau_2022} and \emph{Zero Noise Extrapolation} (ZNE) \cite{Temme_2017}. 
While error mitigation in general scales exponentially \cite{quek2023exponentially}, a combination of PEC and ZNE has been impressively demonstrated recently in a 127-qubit experiment at a circuit depth beyond the reach of exact classical methods \cite{kim_2023_utility, anand2023classical}.
The rate of the exponential cost of error mitigation directly relates to the errors in the quantum devices.
It is expected that these errors can be reduced to a level that noisy devices with error mitigation can already perform practically relevant tasks even before error correction \cite{Bravyi_2022}. 
PEC and ZNE mitigate the errors in expectation values.
While this finds many applications, e.g., in quantum chemistry and physics, most quantum optimization \cite{farhi_2014_qaoa, egger2021warm, zoufal_2023_blackbox} and many quantum machine learning algorithms \cite{Zoufal_2019_qgan, letcher2023tight} build directly on top of measured samples from a quantum computer. 
In optimization, having access to an objective value but not the samples corresponds to knowing the value of an optimal solution but not how to realize it.
Getting these samples is thus a key problem to scale sample-based algorithms on noisy hardware.

In this paper, we discuss the impact of noise on sampling bit strings from a noisy quantum computer and quantify the sampling overhead required to extract good solutions from noisy devices, e.g., in the context of optimization. 
Furthermore, we connect our findings to the \emph{Conditional Value at Risk} (CVaR, also known as Expected Shortfall), an alternative loss function introduced in Ref.~\cite{barkoutsos_2020_cvar}. 
We show that CVaR is robust against noise and can generate meaningful results from noisy samples also for expectation values.
This feature was already conjectured in Ref.~\cite{barkoutsos_2020_cvar} but has not been shown formally.
Our work closes this gap and shows that CVaR evaluated on noisy samples achieves provable bounds on noise-free observables. 
We demonstrate these bounds on up to 127-qubits on a real quantum computer applied to optimization problems, where we find close agreement between the experiments and theory.
In particular, this allows us to apply the known noise-free performance bounds for the \emph{Quantum Approximate Optimization Algorithm} (QAOA) for MAXCUT on 3-regular graphs \cite{farhi_2014_qaoa, wurtz_2021_qaoa}. Thus, our work thus results in \emph{provable} performance guarantees for a variational algorithm even on noisy hardware.

The remainder of this paper is organized as follows. 
First, Sec.~\ref{sec:sampling} discusses the impact of noise on sampling and how to quantify it.
Then, Sec.~\ref{sec:cvar} formally defines the CVaR and shows that it can provide provable bounds to noise-free expectation values from noisy samples.
Afterwards, Sec.~\ref{sec:applications} discuses the implications of the presented results in the context of applications in optimization, machine learning, and quantum time evolution. 
Sec.~\ref{sec:experiments} demonstrates the results on a real quantum computer up to 127-qubits where we find close agreement with the theory.
Last, Sec.~\ref{sec:conclusion} concludes the paper and we discuss open questions for further research.

\section{Sampling from Noisy Quantum Computers \label{sec:sampling}}

Suppose an initial $n$-qubit quantum state $\rho_0$, a quantum operation $\mathcal{U}(\cdot) = U \cdot U^{\dagger}$, and the resulting $\rho = \mathcal{U}(\rho_0)$. 
On a real quantum computer, we usually do not have access to the ideal operation $\mathcal{U}$ but only to a noisy version $\widetilde{\mathcal{U}}$ which we model by $\mathcal{U} \circ \Lambda$. 
Here, $\Lambda$ denotes the noise model. 
We denote the resulting noisy state by $\widetilde{\rho} = \widetilde{\mathcal{U}}(\rho_0)$.

For simplicity, we assume the Pauli-Lindblad noise model introduced in Ref.~\cite{berg2023probabilistic}
\begin{eqnarray}
    \Lambda(\rho) = \prod_{k \in \mathcal{K}} \left( w_k \, (\cdot) + (1-w_k) P_k (\cdot) P_k \right)\rho. \label{eq:noise_model}
\end{eqnarray}
Here, $\mathcal{K}$ denotes the index set for (local) Pauli error terms $P_k$, and $w_k = (1+e^{-2\lambda_k})/2$ for corresponding model coefficients $\lambda_k$ that determine the strength of the noise.
The assumption of Pauli noise can usually be justified via Pauli twirling \cite{knill_randomized_2008, dankert_exact_2009, magesan_scalable_2011}.
In Appendix~\ref{sec:beyond_pauli} we discuss Pauli twirling and the assumption of a Pauli noise model in more detail.

In general, a quantum circuit is not a single operation $\mathcal{U}$ but a concatenation of layers $\mathcal{U}_i$, $i=1, \ldots, l$. 
Their noisy versions are $\widetilde{\mathcal{U}}_i$ with corresponding noise models $\Lambda_i$. 
Crucially, this allows us to learn the noise model for each layer independently~\cite{berg2023probabilistic}.
A common assumption is that the layers $\mathcal{U}_i$ consist of non-overlapping CNOT gates (or other hardware-native two-qubit Clifford gates) and that these layers are possibly alternating with layers of single qubit gates.
Single qubit gates are assumed to be noise-free since their errors are an order of magnitude smaller than those of two-qubit gates.
Therefore, only the noise of the two-qubit gate layers is considered.

Assuming the above layer structure and that the noise model of the quantum processor is sparse allows Ref.~\cite{berg2023probabilistic} to introduce a protocol to efficiently learn the model coefficients $\lambda_k$. 
A property of $\Lambda$ that characterizes the overall strength of the noise is $\gamma = e^{2 \sum_k \lambda_k}$. 
This has a direct operational interpretation, since $\gamma^2$ defines the sampling overhead of applying PEC to mitigate the noise in the context of estimating an expectation value~\cite{Temme_2017, berg2023probabilistic}. 

Here, we first focus on sampling from noisy quantum computers instead of estimating expectation values.
Suppose we prepare a quantum state and afterwards measure the qubits. Then, the probability to sample a bit string $x \in \{0, 1\}^n$ is given by $p_x = \tr(\rho \proj{x})$ for the noise-free state $\rho$ and by $\widetilde{p}_x = \tr(\widetilde{\rho} \proj{x})$ for the noisy state $\widetilde{\rho}$.
The noise model introduced in Eq.~\eqref{eq:noise_model} can also be interpreted as follows: with a probability of $1/\sqrt{\gamma} = \prod_k w_k$ we sample a bit-string from $\rho$ and with probability $1 - 1/\sqrt{\gamma}$ we sample from a state where at least one error occurred.
Here, we assume $\lambda_k \ll 1$ such that we can leverage $e^x = 1+x + \mathcal{O}(x^2)$. It immediately follows that $w_k = e^{-\lambda_k} + \mathcal{O}(\lambda_k^2)$, and thus, $1/\sqrt{\gamma} = \prod_k w_k$.
Then, the law of total probability \cite{kokosaka_2000_probability} implies the lower bound:
\begin{eqnarray} \label{eq:prob_lower_bound}
    \widetilde{p}_x &\geq& p_x / \sqrt{\gamma}.
\end{eqnarray}
In other words, if a noise-free state $\rho$ has probability $p_x$ to sample a bit string of interest $x$, then, if $\rho$ is approximated by $\widetilde{\rho}$ prepared through a noisy process characterized by $\gamma$, we need a multiplicative sampling overhead of $\sqrt{\gamma}$ to guarantee at least the same probability of sampling $x$ as for the noise-free state. 
Thus, as long as we are only interested in generating relevant bit strings that we can efficiently evaluate classically, we can deal with the noise by measuring $\sqrt{\gamma}$-times more often.
This is in contrast to the multiplicative sampling overhead $\gamma^2$ introduced by PEC when we are interested in estimating expectation values.
Interestingly, if we apply PEC and then determine only the sampling probabilities, without evaluating an expectation value, we find that the sampling probabilities are lower bounded by $p_x / \gamma$, i.e., PEC ``amplifies'' the noise to achieve an unbiased estimation of expectation values, see Appendix~\ref{sec:sampling_pec} for more details.

The sampling overhead $\sqrt{\gamma}$ can be derived from the noise model resulting from the noise learning protocol introduced in Ref.~\cite{berg2023probabilistic}. 
However, in the present context, we are not interested in the full description of the noise model, only in $\gamma$.
Recently, Ref.~\cite{mckay2023benchmarking} introduced the \emph{Layer Fidelity} (LF), a metric to measure noise present in the hardware when executing a circuit. 
The LF also assumes the layered gate structure mentioned above and determines the resulting fidelity for each layer of gates.
It has a direct connection to the sampling overhead via $\text{LF}_i = 1 / \sqrt{\gamma_i}$, where $\gamma_i$ characterizes the noise of layer $i$. For multiple layers we can thus rewrite Eq.~\eqref{eq:prob_lower_bound} as 
\begin{eqnarray}
    \widetilde{p}_x &\geq& p_x \prod_i \text{LF}_i.
\end{eqnarray}
Further, the LF has the advantage that it is very cheap to evaluate compared to learning to full noise model. Thus, for a given circuit, the LF allows us to efficiently determine the sampling overhead to compensate the noise.

Other types of errors that we have not mentioned so far are \emph{state preparation and measurement} (SPAM) errors. 
In principle, we can also determine a sampling overhead and compensate for the SPAM errors by increasing the number of samples.
However, particularly for measurement errors, there exists other protocols which might allow for statistical corrections with a smaller sampling overhead \cite{van_den_Berg_2022_trex, Nation_2021_m3}. 
A systematic study of these types of errors would be interesting for future research.

\section{Conditional Value-at-Risk}\label{sec:cvar}

Section~\ref{sec:sampling} shows that we can sample bit strings of interest $x$, i.e., corresponding to the noise-free state $\rho$, by taking $\sqrt{\gamma}$-times more samples from the noisy state $\widetilde{\rho}$.
However, we usually do not know which samples correspond to the noise-free state and which samples were affected by noise.
We now leverage the insight of Sec.~\ref{sec:sampling} and show that the CVaR can provide provable bounds to noise-free expectation values from noisy samples.
The CVaR has already been suggested as a loss function and observable in Ref.~\cite{barkoutsos_2020_cvar}, however, only based on intuition and without theoretical justification.

Consider an integrable real-valued random variable $X$ with cumulative distribution function $F_X: \mathbb{R} \rightarrow [0, 1]$. Then, the (lower) CVaR at level $\alpha \in (0, 1]$ is defined as
\begin{align*}
    \CVaR_{\alpha}(X) &= \alpha^{-1} \mathbb{E}[X ; X \leq x_\alpha] \\
    &\qquad + x_\alpha (1 - \alpha^{-1}\mathbb{P}[X \leq x_\alpha ])\, ,
\end{align*}
where $x_\alpha = \inf\{x \in \mathbb{R} \colon F_X(x) \geq \alpha \}$. In the case when $F_X(x_\alpha)=\alpha$, this definition simplifies to $\CVaR_{\alpha}(X) = \mathbb{E}[X \mid X \leq x_\alpha]$, i.e. we are considering the expectation of $X$ when we are conditioning $X$ to take values in its bottom $\alpha$ quantile. Accordingly, we define the upper CVaR as 
\begin{eqnarray}
    \overline{\CVaR}_{\alpha}(X) = -\CVaR_{\alpha}(-X) \, .
\end{eqnarray}
Therefore we are considering the expectation of $X$ conditioned on values in its upper $\alpha$ quantile.
This allows us to prove the following lemma.

\begin{lemma}\label{lemma:cvar_bounds}
Suppose a random variable $X$ with probabilities $p_x = \mathbb{P}[X = x]$ for $x \in \mathbb{R}$. Further, suppose another random variable $\widetilde{X}$ as well as a given constant $C \geq 1$ such that $\widetilde{p}_x = \mathbb{P}[\widetilde{X} = x] \geq p_x / C$.
Then we have
\begin{eqnarray}
    \CVaR_{\alpha}(\widetilde{X}) &\leq \mathbb{E}[X] \leq& \overline{\CVaR}_{\alpha}(\widetilde{X}) \, ,
\end{eqnarray}
for all $\alpha \leq 1/C$.
Thus, the lower and upper CVaR of $\widetilde{X}$ with $\alpha \leq 1/C$ define lower and upper bounds, respectively, of the expectation value of $X$. 
\end{lemma}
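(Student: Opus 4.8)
The plan is to prove only the lower inequality $\CVaR_{\alpha}(\widetilde{X}) \leq \mathbb{E}[X]$ and then obtain the upper inequality for free by symmetry. Indeed, using the definition $\overline{\CVaR}_{\alpha}(\widetilde{X}) = -\CVaR_{\alpha}(-\widetilde{X})$, I would apply the lower bound to the reflected pair $(-X, -\widetilde{X})$. Since $\mathbb{P}[-\widetilde{X} = x] = \widetilde{p}_{-x} \geq p_{-x}/C = \mathbb{P}[-X = x]/C$, the hypothesis of the lemma is preserved under this reflection, so the lower bound for $(-X, -\widetilde{X})$ reads $\CVaR_{\alpha}(-\widetilde{X}) \leq \mathbb{E}[-X] = -\mathbb{E}[X]$, which after negation is exactly $\mathbb{E}[X] \leq \overline{\CVaR}_{\alpha}(\widetilde{X})$.

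The key tool for the lower bound is the dual (Rockafellar--Uryasev) representation of the lower CVaR as a minimisation over reweightings of the law of $\widetilde{X}$ whose density is capped by $\alpha^{-1}$. Concretely, for the discrete variable $\widetilde{X}$ with masses $\widetilde{p}_x$,
\begin{equation*}
    \CVaR_{\alpha}(\widetilde{X}) = \min \Bigl\{ \sum_x q_x\, x \; : \; 0 \leq q_x \leq \alpha^{-1} \widetilde{p}_x, \ \sum_x q_x = 1 \Bigr\} .
\end{equation*}
Operationally, the minimum is attained by greedily placing as much mass as the cap $\alpha^{-1}\widetilde{p}_x$ allows on the smallest values of $\widetilde{X}$ until the weights sum to one, which recovers the bottom-$\alpha$-quantile average of the primal definition stated above. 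I would either cite this standard identity or, to keep the argument self-contained, verify it from the stated definition by checking that the greedy allocation is the unique minimiser.

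With this representation in hand the proof reduces to exhibiting a single feasible reweighting that reproduces $\mathbb{E}[X]$. The natural candidate is $q_x = p_x$: it is nonnegative, sums to one, and --- crucially --- respects the cap, because for every $x$ with $p_x > 0$ the hypotheses give $\alpha^{-1}\widetilde{p}_x \geq \alpha^{-1} p_x/C \geq p_x$, where the last step uses precisely $\alpha \leq 1/C$ (for $p_x = 0$ the constraint $q_x = 0$ is trivial). Since the CVaR is the minimum over all feasible reweightings and $q_x = p_x$ is feasible,
\begin{equation*}
    \CVaR_{\alpha}(\widetilde{X}) \leq \sum_x p_x\, x = \mathbb{E}[X] ,
\end{equation*}
which is the desired lower bound; the upper bound then follows from the symmetry argument above.

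I expect the main obstacle to be not the inequality chain itself --- which is short --- but justifying the variational representation of the CVaR at the level of rigour the paper needs, especially handling the boundary case $F_{\widetilde{X}}(\tilde{x}_\alpha) \neq \alpha$ where the bottom quantile receives only a fractional share of the mass at the threshold value. The dual formulation absorbs this subtlety automatically through the continuous cap $0 \leq q_x \leq \alpha^{-1}\widetilde{p}_x$, so if a direct manipulation of the piecewise definition is preferred, the careful bookkeeping of that partial threshold mass is where I would spend most of the effort. The one place where the quantitative hypotheses enter is the feasibility check $\alpha^{-1}\widetilde{p}_x \geq p_x$, which is exactly why the statement requires $\alpha \leq 1/C$.
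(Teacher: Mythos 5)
Your proposal is correct and takes essentially the same route as the paper: the paper's proof of the lower bound consists precisely in establishing the capped-reweighting (dual) characterization you invoke---it identifies $\CVaR_{1/C}(\widetilde{X})$ with the minimum of $\sum_x x\,p_x$ over all distributions satisfying $p_x \leq C\widetilde{p}_x$, attained by the greedy allocation on the smallest support points---and it obtains the upper bound by the same reflection $X \mapsto -X$. The only differences are organizational: the paper first reduces to $\alpha = 1/C$ by monotonicity of $\CVaR_\alpha$ in $\alpha$ and then verifies the variational identity by hand, whereas you cite the Rockafellar--Uryasev representation and check feasibility of $q = p$ directly for every $\alpha \leq 1/C$.
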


\begin{proof}
By monotonicity of $\CVaR_{\alpha}(\widetilde{X})$ in $\alpha$, it suffices to show the claim for $\alpha=1/C$. Let $x_1<\cdots<x_n$ denote the support of $\widetilde{p}$. Take $k\leq n$ such that $\sum_{i\leq k-1} \widetilde{p}_{x_i} < 1/C \leq \sum_{i\leq k} \widetilde{p}_{x_i}$, then
\begin{align*}
    \CVaR_{1/C}(\widetilde{X}) = C \sum_{i\leq k} x_i \widetilde{p}_{x_i} + x_k \left(1 - C\sum_{i\leq k} \widetilde{p}_{x_i} \right) \,.
\end{align*}
Clearly, the $p$ minimizing $\mathbb{E}[X]=\sum_x x p_{x}$ and satisfying $p_x \leq C\widetilde{p}_x$ for all $x$ is also supported on $\{x_1,\ldots,x_n\}$ and satisfies
\begin{align*}
    p_{x_i} &= C\widetilde{p}_{x_i}\text{ for all $i<k$, and} \\
    p_{x_k} &\leq 1 - \sum_{i<k} p_{x_i} = 1 - C\sum_{i<k} \widetilde{p}_{x_i}
\end{align*}
From this, the claim is immediate by using the above to lower bound $\mathbb{E}[X]$. The upper bound follows by applying the lower bound to $-X$ and $-\widetilde{X}$ in place of $X$ and $\widetilde{X}$.
\end{proof}

Next, let us consider again a noise-free $n$-qubit quantum state $\rho$, its noisy version $\widetilde{\rho}$, and the corresponding $\gamma$. Further, suppose a diagonal Hamiltonian $H$, which can also be interpreted as a function $h: \{0, 1\}^n \rightarrow \mathbb{R}$.
Let us define the random variables $X, \widetilde{X} \in \{0, 1\}^n$, as the result of measuring $\rho$ and $\widetilde{\rho}$, respectively. 
Then, Lemma~\ref{lemma:cvar_bounds} and Eq.~\eqref{eq:prob_lower_bound} immediately imply
\begin{eqnarray} \label{eq:cvar_rho_bounds}
    \CVaR_{\alpha}(h(\widetilde{X})) &\leq \mathbb{E}[h(X)] \leq& \overline{\CVaR}_{\alpha}(h(\widetilde{X})) \, ,
\end{eqnarray}
for all $\alpha \leq 1/\sqrt{\gamma}$.
Since, for a diagonal $H$ we have $\tr(\rho H) = \mathbb{E}[h(X)]$, Eq.~\eqref{eq:cvar_rho_bounds} implies that the lower/upper CVaR computed from the noisy samples $\rho$ provide lower/upper bounds for the noise-free expectation value of $\rho$.
Further, suppose $\rho$ is the ground state of the diagonal $H$. 
Then, $h(\widetilde{X})$ cannot achieve any values smaller than $\tr(\rho H)$ and the left inequality in Eq.~\eqref{eq:cvar_rho_bounds} is an equality.
Thus, the noisy lower CVaR is equal to the ground state energy (similarly for the upper CVaR if $\rho$ would correspond to the maximally excited state of $H$).
Further, we also know that if the noisy CVaR would equal the ground state energy, the fidelity between the noise-free state $\rho$ and the noisy state $\widetilde{\rho}$ is lower bounded by the considered $\alpha$, i.e., $\mathcal{F}(\rho, \widetilde{\rho}) \geq \alpha$.

Diagonal Hamiltonians arise, e.g., in optimization problems or in the form of projectors $\proj{x}$, as can be used, e.g., for fidelity estimations. We will discuss these applications in more detail in Sec.~\ref{sec:qopt} and Sec.~\ref{sec:fidelities}.
However, many applications also involve non-diagonal Hamiltonians, most prominently applications in quantum chemistry and physics \cite{peruzzo_2014_vqe}.
Suppose a non-diagonal Hamiltonian $H = \sum_i c_i P_i$, where $P_i$ denote Pauli terms and $c_i$ the corresponding weights. Then, we can decompose $H$ into a sum of Hamiltonians consisting of subsets of commuting Pauli strings $H = \sum_j H_j$. 
All Pauli terms in $H_j$ can be simultaneously diagonalized via single qubit Pauli rotations. Thus, we can assume the $H_j$ are diagonal without loss of generality. 
We define the corresponding functions $h_j: \{0, 1\}^n \rightarrow \mathbb{R}$ as well as noise-free and noisy random variables $X_j, \widetilde{X}_j$, respectively, resulting from measuring the quantum states with the corresponding post-rotations to diagonalize the Hamiltonians $H_j$.
This implies
\begin{eqnarray}
    \sum_j \CVaR_{\alpha}(h_j(\widetilde{X}_j)) &\leq&
    \tr(\rho H) \nonumber \\ &\leq&
    \sum_j \overline{\CVaR}_{\alpha}(h_j(\widetilde{X}_j))\, ,
    \label{eq:cvar_non_diagonal_bounds}
\end{eqnarray}
for all $\alpha \leq 1/\sqrt{\gamma}$, which extends the previous result to non-diagonal Hamiltonians.
Note that in contrast to diagonal Hamiltonians, we cannot draw conclusions anymore about the groundstate energy or the fidelity between noisy state and groundstate. For instance, the lower bound in Eq.~\eqref{eq:cvar_non_diagonal_bounds} can be strictly smaller then the groundstate energy.

The CVaR can be estimated using Monte Carlo sampling. The variance of this estimator depends on the type of distribution considered but is always bounded by $\mathcal{O}(1/\alpha^2)$. 
However, for instance, for  Normal and Bernoulli distributions it can even be shown that in the present context the analytic behavior of the variances of CVaR for $\alpha \rightarrow 0$ is $\mathcal{O}(1/\alpha)$, where for Bernoulli, we assume that the success probability $p$ satisfies $p = \mathcal{O}(1/\sqrt{\gamma})$, which is the relevant case for the applications we consider later on, cf.~Sec.~\ref{sec:fidelities}. 
The derivation for the variance bounds for CVaR estimation are provided in Appendix~\ref{sec:cvar_variance}.
Thus, in these cases and for $\alpha = 1/\sqrt{\gamma}$, the variance increases as $\mathcal{O}(\sqrt{\gamma})$.
This renders the CVaR a very promising noise-robust loss function for variational quantum algorithms. The variance is amplified significantly less than for PEC, where it increases as $\mathcal{O}(\gamma^2)$. However, we need to recall that PEC comes with much stronger theoretical guarantees, i.e., provides an unbiased estimator instead of a bound. Thus, depending on the application, CVaR might not be applicable.

In the remainder of this section we discuss improvements to the lower and upper bounds for cases where we have more information about the noise-free state.
I.e, properties that the bit strings measured from the noise-free state must have but that might not persist under noise.
Examples of such properties are particle preservation in quantum chemistry~\cite{Bonet_Monroig_2018_post_selection, Choquette_2021} and constraints satisfaction in quantum optimization~\cite{barkoutsos_2020_cvar}.

Suppose a function $\mathcal{F}: \{0, 1\}^n \rightarrow \{0, 1\}$ that determines whether a bit string $x$ has a required property. 
Here, $\mathcal{F}(x) = 1$ indicates the presence of the property.
Further, suppose a given Hamiltonian $H$ and, for simplicity, let us assume it is diagonal and defined by a function $h: \{0, 1\}^n \rightarrow \mathbb{R}$. 
From this, we can construct a modified Hamiltonian $H_{\mathcal{F}}^{M}$ defined by the function
\begin{eqnarray}
    h_\mathcal{F}^{M}(x) &=& \begin{cases}
        h(x) & \text{if } \mathcal{F}(x) = 1, \\
        M & \text{otherwise,}
    \end{cases}
\end{eqnarray}
where $M$ is a given constant. 
We thus have $\tr(\rho H) = \tr(\rho H_{\mathcal{F}}^{M})$ in the noise-free case for any $M$, since all noise-free samples $x$ satisfy $\mathcal{F}(x) = 1$.
Next, we assume constants $M_l$ and $M_u$ that satisfy $M_l \leq h(x) \leq M_u$ for all $x$ with $\mathcal{F}(x) = 1$.
Samples with $\mathcal{F}(x) = 0$ must be affected by noise, which allows us to filter out samples where the noise destroys the required property.
Although there might still be noisy samples that are feasible, the post-selection reduces the impact of noise.
Due to the equality of expectation values in the noise-free case and the choice of $M_l$ and $M_u$, 
we immediately get
\begin{eqnarray}
\CVaR_{\alpha}(h_{\mathcal{F}}^{M_u}(\widetilde{X})) & \leq
\mathbb{E}[X] \leq & \overline{\CVaR}_{\alpha}(h_{\mathcal{F}}^{M_l}(\widetilde{X})),
\label{eq:post_selected_cvar_bounds}
\end{eqnarray}
for all $\alpha \leq 1 / \sqrt{\gamma}$.
This can lead to significantly better bounds since we can leverage the additional information about the considered problem to filter out more noisy samples. 
For non-diagonal Hamiltonians, see Eq.~\eqref{eq:cvar_non_diagonal_bounds}, it is possible to define a filter function $\mathcal{F}_j$ for each $H_j$.

Another implication of our results is that the average over the post-selected noisy samples must lie between the lower and upper bounds resulting from the filtered CVaR due to the monotonicity of CVaR with respect to $\alpha$.
Thus, the CVaR allows to bound the bias that post-selection may introduce and provide a quality measure for the estimated expectation value.

\section{Applications}\label{sec:applications}

We now discuss the presented theory on sampling probabilities and CVaR in the context of different applications: first, quantum optimization \cite{farhi_2014_qaoa, barkoutsos_2020_cvar, egger2021warm, zoufal_2023_blackbox, weidenfeller2022scaling}, and second, fidelity-based algorithms, such as Quantum Support Vector Machines (QSVM) \cite{Havlicek2019, gentinetta2022complexity, gentinetta2023quantum} as well as Variational Quantum Time Evolution (VarQTE) \cite{McArdle_2019_varqte, Yuan_2019_varqte, Zoufal_2021_varqbm, Zoufal_2023_varqte_error_bounds, Gacon_2021_qnspsa, gacon2023stochastic, gacon2023variational}. These are illustrative examples, the theory presented here is applicable to many other domains, such as quantum chemistry and physics.

\subsection{(Variational) Quantum Optimization \label{sec:qopt}}

Many variational quantum algorithms have been proposed to solve discrete optimization problems, such as Quadratic Unconstrained Binary Optimization (QUBO). 
Most of them have a similar structure and interpret every measured bit string as a potential solution to the problem.
Proposals that derive variable values from expectation values \cite{Bravyi2019, fuller2021approximate, teramoto2023quantumrelaxation, patti2022variational} are, however, not in the focus of our work.

Suppose a generic unconstrained binary optimization problem of the form
\begin{eqnarray}\label{eq:generic_binary_opt}
    \min_{x \in \{0, 1\}^n} f(x)\, ,
\end{eqnarray}
where $f: \{0, 1\}^n \mapsto \mathbb{R}$ is an objective function on $n$ binary variables. 
For instance, a QUBO has $f(x) = x^T Q x$ with $Q \in \mathbb{R}^{n \times n}$. 
In case of QUBO, we can apply a change of variables $x_i = (1 - z_i)/2$ for $z_i \in \{-1, +1\}$ and replace $z_i$ by the Pauli $Z_i$ matrix on qubit $i$ and products $z_i z_j$ by $Z_i \otimes Z_j$ to define a diagonal Hamiltonian $H$ and translate Eq.~\eqref{eq:generic_binary_opt} into a ground state problem~\cite{lucas_2014_ising}
\begin{eqnarray}
    \min_{\ket{\psi}} \braket{\psi | H | \psi}\, . \label{eq:ground_state_problem}
\end{eqnarray}
As mentioned in Sec.~\ref{sec:cvar}, we can transform any generic function to a Hamiltonian where $f(x)$ defines the diagonal element of $H$ at the position of the computational basis state $\ket{x}$~\cite{zoufal_2023_blackbox}.

Most variational quantum algorithms for binary optimization are defined via a parameterized ansatz $\ket{\psi(\theta)}$ with parameters $\theta \in \mathbb{R}^d$, a loss function $\mathcal{L}(\theta)$ that maps parameter values to a loss value, and an optimizer to solve
\begin{eqnarray}
    \min_{\theta \in \mathbb{R}^d} \mathcal{L}(\theta).
\end{eqnarray}
After the final parameters $\theta^*$ are determined, the resulting state $\ket{\psi(\theta^*)}$ is measured and the sampled bit strings are used as potential solutions to the problem. Samples obtained during the execution of the algorithm can also be considered as solutions in case they achieve better objective values than the final samples.

If we set $\mathcal{L}(\theta) = \braket{\psi(\theta) | H | \psi(\theta)}$ for some ansatz $\ket{\psi(\theta)}$, we get the \emph{Variational Quantum Eigensolver} (VQE) \cite{peruzzo_2014_vqe}.
Further, if we define the ansatz as
\begin{eqnarray}
    \ket{\psi(\theta)} &=& 
    \prod_{j=1}^p e^{-i H_X \beta_j} e^{-i H \gamma_j} \ket{+},
\end{eqnarray}
we get the QAOA \cite{farhi_2014_qaoa}, where $p$ defines the depth, $\beta_j, \gamma_j \in \mathbb{R}$ are the variational parameters, and $H_X = -\sum_{i=1}^n X_i$, where $X_i$ denotes the Pauli $X$ matrix on qubit $i$.

The results from Sec.~\ref{sec:sampling} and~\ref{sec:cvar} immediately apply to QAOA.
Suppose we already have a quantum circuit that, when executed and measured in an ideal noise-free setting, produces good solutions to a considered optimization problem.
Sec.~\ref{sec:sampling} immediately implies that when executed on a noisy devices, a sampling overhead of $\sqrt{\gamma}$ is sufficient to extract solutions of the same quality as in the noise-free case.
In certain cases it might be feasible to determine $\theta^*$ classically \cite{streif2019training, sack2021quantum} and only use the quantum computer to sample good solutions, since evaluating (local) expectation values might be easier than sampling from the full circuit \cite{begusic023simulating}.
However, in cases where we must train the parameterized quantum circuit we can replace the expectation value by the CVaR~\cite{barkoutsos_2020_cvar}.
The results introduced in Sec.~\ref{sec:cvar} now provide guidance on how to choose $\alpha$ and the required sampling overhead to get good results from a noisy device. 
We illustrate this on concrete examples in Sec.~\ref{sec:127_poly3} and Sec.~\ref{sec:maxcut_40_rr3}.

Our results allow us to apply proven performance guarantees for QAOA without noise to noisy hardware.
For MAXCUT on 3-regular graphs, QAOA achieves a worst-case performance of $0.692$ for $p=1$ \cite{farhi_2014_qaoa}, $0.7559$ for $p=2$, and (under certain assumptions) $0.7924$ for $p=3$ \cite{wurtz_2021_qaoa}. 
With a $\sqrt{\gamma}$ sampling overhead these guarantees are recovered even in the noisy regime. 
Furthermore, for 3-regular graphs, we can always train QAOA with $p \leq 3$ classically by simulating at most 30 qubits at a time \cite{Sack2023}, i.e., we can determine the optimal parameters via classical simulation and then sample good solutions with a $\sqrt{\gamma}$ overhead from the quantum computer.
Since $\gamma$ grows exponentially with the circuit size the sampling overhead introduced to combat noise may exceed the cost of a brute force search.
A simple back of the envelope calculation, discussed in Appendix~\ref{sec:brute_force}, determines a minimum layer fidelity require to apply a depth $p$ QAOA.

The \emph{Quantum Alternating Operator Ansatz} (QAOA') is an alternative of QAOA~\cite{hadfield_quantum_2019}. Here, a constraint, e.g., a fixed Hamming weight (i.e., a fixed number of ones in a bit string) is enforced by changing the mixer to preserve such states~\cite{wang2020xymixers,cook2020vertexcover,golden2023numerical} and starting in (a superposition of) feasible states~\cite{baertschi2022shortdepth,baertschi2020grover}. Thus, if QAOA' is executed noise-free, all resulting samples satisfy the given constraint. This is an example of a filter function $\mathcal{F}$, as introduced in Sec.~\ref{sec:cvar}, helps to improve the CVaR bounds on the corresponding expectation value.

\subsection{Fidelities\label{sec:fidelities}}

Several quantum algorithms leverage fidelity estimation between two quantum states in a sub-routine.
In the following, we first discuss how to leverage the CVaR bounds to approximate fidelities on noisy quantum computers and then how this impacts two concrete classes of algorithms: QSVMs and VarQTE.

Suppose we have $n$-qubit quantum circuits $U$ and $V$ that define $\ket{\psi} = U \ket{0}$ and $\ket{\phi} = V \ket{0}$, respectively. A common approach to estimate the fidelity between $\ket{\psi}$ and $\ket{\phi}$ is the \emph{compute-uncompute} method given by
\begin{eqnarray}
    \mathcal{F}(\ket{\psi}, \ket{\phi}) &=& \left| \braket{0 | V^{\dagger}U | 0} \right|^2.
\end{eqnarray}
$\mathcal{F}$ is thus the probability of measuring $\ket{0}$ for the state $V^{\dagger} U\ket{0}$.
This also equals the expectation value $\tr(\rho H)$ for the state $\rho = V^{\dagger} U\ket{0}$ and the diagonal Hamiltonian $H = \proj{0}$.
Thus, we can use $\overline{\CVaR}$ to get an upper bound of the noise-free fidelity.
Here, the resulting random variable follows a Bernoulli distribution, as the expectation value counts the number of measured $\ket{0}$'s and ignores all other outcomes.
Since the variance of the CVaR for a Bernoulli random variable scales with $1/\alpha$, see Sec.~\ref{sec:cvar}, we can set $\alpha = 1/\sqrt{\gamma}$ and use Eq.~\eqref{eq:cvar_rho_bounds} to upper bound the fidelity with a sampling overhead of $\sqrt{\gamma}$ compared to the $\gamma^2$ required by PEC to get an unbiased estimation.

QSVMs leverage a quantum feature map to define a quantum kernel and provably outperform classical computers on certain tasks~\cite{Liu_2021}.
The quantum feature map is a parameterized quantum circuit that takes a classical feature vector $x$ as an input to prepare a corresponding quantum state $\ket{\phi(x)}$.
The corresponding quantum kernel is then defined via the Hilbert-Schmidt inner product of $\ket{\phi(x_1)}$ and $\ket{\phi(x_2)}$ for two classical data points $x_1, x_2$ from some training set, which equals $\mathcal{F}(\ket{\psi}, \ket{\phi})$, and thus, falls exactly into the case above.

VarQTE for real or imaginary time evolution assumes a given parametrized quantum state $\ket{\psi(\theta)}$ and then projects the exact state evolution to the parameter evolution of the ansatz. This approximates the desired time evolution in the sub-space that the ansatz can represent. The exact projection requires the evaluation of the quantum geometric tensor (QGT)~\cite{McArdle_2019_varqte, Yuan_2019_varqte, Zoufal_2023_varqte_error_bounds}. 
However, that quickly becomes prohibitive as the number of parameters increases.
Thus, multiple approximate variants of VarQTE have been proposed that workaround the evaluation of the QGT \cite{Gacon_2021_qnspsa, gacon2023stochastic, gacon2023variational}. Many of these approximations leverage that the Hessian of the fidelity $|\braket{\psi(\theta) | \psi(\theta + \delta \theta)}|^2$ with respect to $\delta \theta$ which is proportional to the QGT of $\ket{\psi(\theta)}$ up to higher order terms.
They either use \emph{Simultaneous Perturbation Stochastic Approximation} (SPSA) to estimate the Hessian from evaluations of the fidelity as approximations of the QGT, or they construct alternative loss functions that directly leverage the mentioned fidelity without constructing an approximate QGT. In all variants, the parameter disturbances $\delta \theta$ are small, which implies fidelities close to one. Thus, this is in the regime where the noisy CVaR is very close to the noise-free expectation value, i.e., the sweet spot of the introduced approximation.

\section{Experiments\label{sec:experiments}}

Within this section, we analyze two optimization problems from the literature to demonstrate the theory presented in this paper.
In both cases, we run QAOA circuits on \emph{ibm\_sherbrooke}~\cite{ibm_quantum_devices}.
First, smaller but deeper circuits, and second, larger but more shallow circuits.
We always find a nice agreement between the theory and the experimental results.
All results within this section are achieved without twirling the circuits.
For a comparison and discussion of twirled and untwirled circuits see Appendix~\ref{sec:beyond_pauli}.

\emph{ibm\_sherbrooke} is a 127 qubit superconducting qubit device with an \emph{echoed cross-resonance} (ECR) gate as two-qubit gate~\cite{Sheldon2016}.
This gate is equivalent to a CNOT gate up to single-qubit gates and has a clear direction on the hardware.
We let the transpiler take care of the mapping from CNOT gates to ECR gates and will in the following write about CNOT gates for better readability.

\subsection{QAOA for MAXCUT on 3-regular graphs with 40 nodes}
\label{sec:maxcut_40_rr3}

\begin{figure}[t!]
    \centering
    \includegraphics[width=\columnwidth]{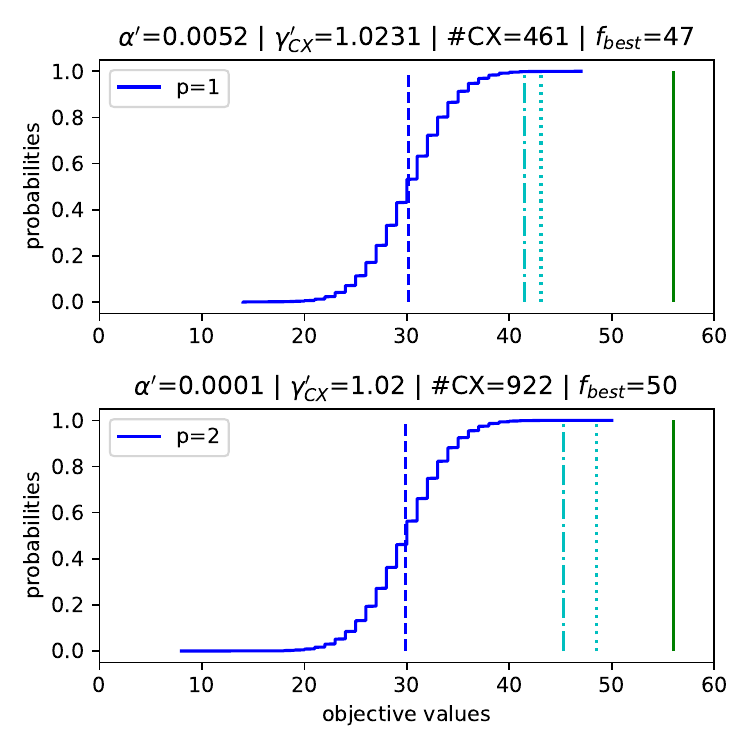}
    \caption{QAOA results on 40-qubits. 
    The curve is the cumulative distributions function resulting from sampling the circuits for a MAXCUT instance executed on \emph{ibm\_sherbrooke} for $p=1$ with $10^5$ shots (top) and $p=2$ with $10^7$ shots (bottom).
    The vertical lines show the corresponding noisy expectation values (dashed blue), the noise-free expectation values evaluated using light-cone optimized classical simulation (cyan dashed-dotted), the $\overline{\CVaR}_{\alpha_p}$ (cyan dotted), and the globally optimal solution equal to $56$ (green solid). The title shows the fitted $\alpha_p'$ such that the $\overline{\CVaR}_{\alpha_p'}$ are equal to the noise-free expectation values (i.e.~cyan dashed-dotted).}
    \label{fig:40_qubit_results}
\end{figure}

In this section, we examine QAOA for MAXCUT on a random three-regular graph with 40 nodes, i.e., on 40 qubits.
We take the problem instance from Ref.~\cite{Sack2023} and optimize the parameters classically for QAOA with depth $p=1$ and $p=2$ using light-cone simplifications.
This allows us to evaluate the required 2-local expectation values by simulating maximally 14 qubits at a time, see details in Ref.~\cite{Sack2023}.
The circuits and optimal parameters are further discussed in Appendix~\ref{sec:40_qubit_circuits}.

We apply staggered dynamic decoupling for error suppression, as discussed in Appendix~\ref{sec:staggered_dd}.
The circuits are constructed such that they consist of only two different layers of CNOT gates on a line of 40 qubits, denoted by $q_0, \ldots q_{39}$. The first layer is composed of 20 CNOT gates on qubits $(q_i, q_{i+1})$ for $i$ even and the second composed of 19 CNOT gates on $(q_i, q_{i+1})$ for $i$ odd.
Using the technique introduced in Ref.~\cite{mckay2023benchmarking} the measured LF for these two layers is $LF_1 = 0.7686$ and $LF_2 = 0.7444$, respectively~\footnote{At the time of writing the experiment to measure layer fidelity is under implementation in Qiskit Experiments~\cite{QiskitExperiments}. See \url{https://github.com/Qiskit-Extensions/qiskit-experiments}}.
We take the geometric average over the total number of CNOT gates and derive a CNOT fidelity as $\mathcal{F}_{CX} = (LF_1 \times LF_2)^{1/39} = 0.9858$. This also allows us to compute the \emph{error per layered gate} (EPLG) of Ref.~\cite{mckay2023benchmarking} as $1 - \mathcal{F}_{CX} = 0.0142$.
We also define $\gamma_{CX} = 1 / \mathcal{F}_{CX}^2 = 1.0290$.
In total, the circuits for $p=1$ and $p=2$ have 461 and 922 CNOT gates, respectively, all in form of the before mentioned layers. 
We can thus compute the sampling overhead for $p=1$ and $p=2$ as $\sqrt{\gamma_1} = 735.0$ and $\sqrt{\gamma_2} = 540275.9$, respectively, which corresponds to $\alpha_1 = 1.361\times 10^{-3}$ and $\alpha_2 = 1.851\times 10^{-6}$, for $p=1$ and $p=2$, respectively.
A regularly measured EPLG evaluated over a chain of 100-qubits is provided for \emph{ibm\_sherbrooke} in the IBM Quantum Platform~\cite{ibm_quantum_devices}.
At the time of the experiment the backend reported an EPLG of $0.017$, which is slightly higher than our measured EPLG.
This is expected, since we restrict to 40 qubits.
In any case, the EPLG reported by the backend is a good first proxy to estimate the LF and resulting $\gamma$ when executing a particular circuit on a device.

\begin{table}[t!]
    \centering
    \begin{tabular}{lcc}
                                    & $p=1$  & $p=2$ \\
        \hline
        global optimum              & \multicolumn{2}{c}{56} \\
        \hline
        $\mathbb{E}[\widetilde{X}]$ & 30.2 & 29.9 \\
        $\mathbb{E}[X]$             & 41.5 & 45.3 \\
        $\overline{\CVaR}_{\alpha_p}$    & 43.1 & 48.5 \\        
        best sampled value          & 47   & 50   \\
        \hline
        number of CNOT gates        & 461  & 922 \\      
        $\sqrt{\gamma_p}$           & $735.0$ & $540275.9$ \\
        $\alpha_p$                  & $1.361 \times 10^{-3}$ & $1.851\times10^{-6}$ \\
        $\alpha_p'$                 & $5.180 \times 10^{-3}$ & $1.071\times10^{-4}$ \\
        $\gamma_{CX}$               & \multicolumn{2}{c}{1.0290}  \\
        ${\gamma}_{CX,p}'$          & $1.0231$ & $1.0200$ \\
                
    \end{tabular}
    \caption{QAOA results on 40-qubits: This table shows the different results for $p=1$ and $p=2$ when running QAOA on the introduced 40-qubit MAXCUT instance. It shows the noisy and noise-free expectation values as well as the CVaR estimates, best sampled values and the global optimal value. Further, it shows the total number of CNOT gates, the overall $\sqrt{\gamma_p}$ for the circuits, the $\alpha_p$ derived from the LF as well as the $\alpha_p'$ derived from calibrating the CVaR on the noise-free expectation values, the corresponding $\gamma_{CX}$ and $\gamma_{CX,p}'$.}
    \label{tab:40_qubit_results}
\end{table}

To apply the CVaR bounds, we run the circuits for $p=1$ with $10^5$ shots and for $p=2$ with $10^7$ shots. 
This corresponds to 137 and 19 samples that remain to estimate the CVaR after sorting them and keeping the best $\alpha_1$ and $\alpha_2$ fraction, respectively.
The data confirm that $\overline{\CVaR}_{\alpha_p}$ provides an upper bound (since MAXCUT is a \emph{maximization} problem) to the noise-free expectation values, as predicted, see Fig.~\ref{fig:40_qubit_results} and Tab.~\ref{tab:40_qubit_results}.
The CVaR upper bound exceeds the noise free value by $3.9\%$ for $p=1$ and by $7.1\%$ for $p=2$.

We also use the noise-free expectation values obtained from the light-cone simulation to calibrate an $\alpha$ such that the CVaR matches the noise-free result exactly, denoted by $\alpha_p'$.
This allows us to derive an induced \emph{effective} $\gamma_{CX,p}'$ and compare it to the true $\gamma_{CX}$.
We find that $\gamma_{CX,p}'$ is quite stable for the different $p$ and significantly smaller than $\gamma_{CX}$, see Tab.~\ref{tab:40_qubit_results}. 
This may imply that the observable of interest is not affected by all the errors that may occur.
Crucially, this observation, may allow us to calibrate $\alpha$ for a particular application and choose larger values than implied by the LF, e.g., by running circuits of similar structure but with known noise-free results.
This may reduce the sampling overhead in certain scenarios while still achieving good results.
However, in general, the lower/upper bounds proven in Sec.~\ref{sec:cvar} will not hold anymore for $\alpha > 1/\sqrt{\gamma}$.

Comparing the $\overline{\CVaR}_{\alpha_p}$ and the best samples with the globally optimal solution, we find that they achieve approximation ratios of $0.770$ (CVaR) and $0.839$ (best sample) for $p=1$, and $0.866$ (CVaR) and $0.892$ (best sample) for $p=2$.
All these numbers exceed the corresponding theoretical lower bounds of $0.692$ ($p=1$) and $0.756$ ($p=2$) discussed in Sec.~\ref{sec:qopt}.

\subsection{QAOA on Hardware-efficient Higher-Order Ising Model with 127 variables
\label{sec:127_poly3}}

\begin{figure}
    \centering
    \includegraphics[width=0.99\columnwidth]{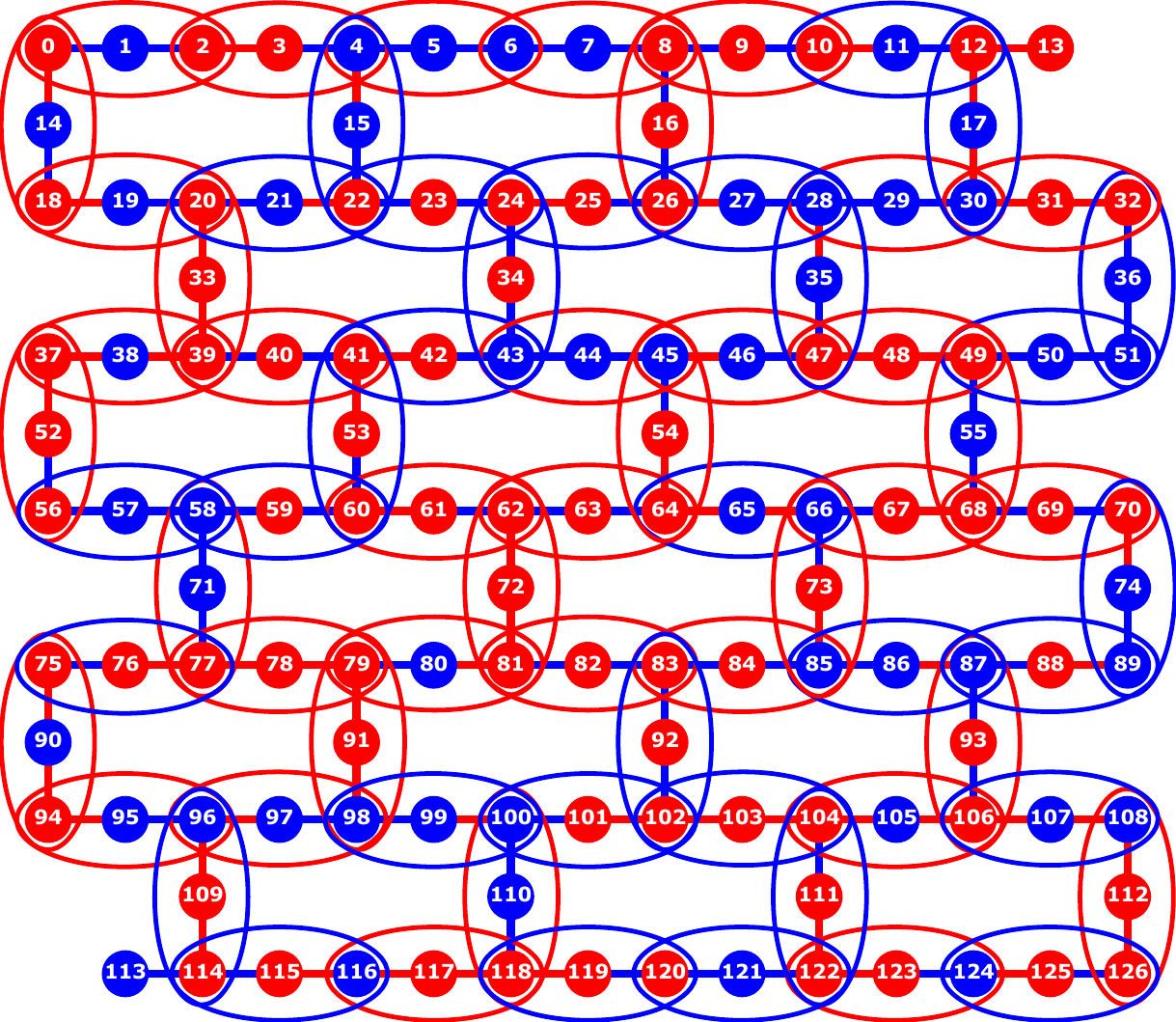}
    \caption{Example heavy-hex hardware compatible $127$ qubit higher order Ising model. Nodes denote the linear terms, edges between nodes denote the quadratic terms, and the ovals encircling three neighboring nodes on the hardware graph denote hyper-edges. Polynomial (Ising model) coefficients of $-1$ are denoted by red, and $+1$ are denoted by blue. }
    \label{fig:127_qubit_higher_order_Ising_model}
\end{figure}

We now show results of running QAOA on higher-order spin glass models. 
Originally described in Refs.~\cite{pelofske2023qavsqaoa, pelofske2023short}, these models are designed for a heavy-hex connectivity graph \cite{Chamberland_2020} of \emph{ibm\_sherbrooke}. 

We define a \emph{minimization} problem for the following cost Hamiltonian corresponding to a random coefficient spin glass problem with cubic terms and a connectivity graph that is defined to be compatible with an arbitrary heavy-hex lattice graph $G=(V,E)$, see Fig.~\ref{fig:127_qubit_higher_order_Ising_model}:
\begin{align}
        H =& \sum_{v \in V} d_v \cdot Z_v + \sum_{(i,j) \in E} d_{i,j} \cdot  Z_i \otimes Z_j  \nonumber \\
&    + \sum_{l \in W} d_{l,n_1(l),n_2(l)} \cdot Z_l \otimes Z_{n_1(l)} \otimes Z_{n_2(l)}.
    \label{equation:heavy_hex_higher_order_problem_instance}
\end{align}

As $G$ is a connected bipartite graph with vertices $V = \{0,\ldots,n-1\}$, it is uniquely \emph{bipartitioned} as $V=V_2 \sqcup V_3$ with $E \subset V_2 \times V_3$, where $V_i$ consists of vertices of degree \emph{at most}~$i$.
With~$W \subseteq V_2$ in~\eqref{equation:heavy_hex_higher_order_problem_instance}, we denote the subset of vertices in $V_2$ of degree \emph{exactly} $2$. Each node $l$ in $W$ has two neighbors, denoted by $n_1(l)$ and $n_2(l)$.
Thus $d_v$, $d_{i,j}$, and $d_{l,n_1(l),n_2(l)}$ are the coefficients representing the random selection of the linear, quadratic, and cubic coefficients, respectively. The random coefficients are chosen from $\{+1, -1\}$ with equal probability.
An example of such a random higher-order Ising model is in Fig.~\ref{fig:127_qubit_higher_order_Ising_model}. 

We use the qubits in $V_2$ to compute and uncompute parities into, for the $ZZ$ and $ZZZ$ terms in which they are contained. 
The unitaries $e^{-i\gamma ZZ}$ and $e^{-i\gamma ZZZ}$ are then realized with $R_z(2\gamma)$-rotations on these parity qubits. Computing and uncomputing parities needs $1+1$ and $2+2$ CNOT gates for the quadratic and cubic terms, respectively; however the CNOT gates for $Z_{l}Z_{n_1(l)}$ and $Z_{l}Z_{n_2(l)}$ can be subsumed into the CNOT gates for $Z_l Z_{n_1(l)} Z_{n_2(l)}$. 

Furthermore, $G$ as a bipartite graph of maximum degree 3 admits a 3-edge-coloring due to K\H{o}nig's line coloring theorem, meaning that these $2+2$ CNOT gates can be scheduled simultaneously for all terms in just $3+3$ non-overlapping layers~\cite{pelofske2023qavsqaoa}.
Depth-$p$ QAOA circuits for these problems thus have a CNOT depth of only $6p$, independent of the system size $n$. 
Further circuit details are given in Appendix~\ref{sec:127_qubit_circuits}.

Leveraging parameter transfer of QAOA angles for problems with the same structure but varying numbers of qubits, allows us to obtain good angles for these $127$ qubit QAOA circuits for $p=1, \ldots, 5$, without on-device variational learning \cite{heavy_hex_QAOA_parameter_transfer2023}. Additionally, we utilize converged MPS simulations with a bond dimension of $\chi=2048$ to verify that the fixed QAOA angles produce good expectation values \cite{heavy_hex_QAOA_parameter_transfer2023}, for all circuits. The hardware-compatible circuits are run on the \emph{ibm\_sherbrooke} device, again using staggered dynamic decoupling for error suppression, see Appendix~\ref{sec:staggered_dd}. The optimal solutions of the higher order Ising models were computed using CPLEX \cite{cplexv12, heavy_hex_QAOA_parameter_transfer2023}. 

As before in Sec.~\ref{sec:maxcut_40_rr3}, we only have a small number of unique layers of CNOT gates.
Since we want to cover a graph of degree three, we need at least three layers, see Appendix~\ref{sec:127_qubit_circuits}, with 144 CNOT gates in total.
The measured LF for the three layers is $LF_1 = 0.056926$, $LF_2 = 0.029630$ and $LF_3 = 0.167959$. 
These fidelities are significantly smaller than for the 40 qubit circuits in Sec.~\ref{sec:maxcut_40_rr3}.
The reason is that the qubits and gates on a 127-qubit devices are not all the same, there are always some better and some worse.
For 40 qubits, we could select the best line of 40 qubits (see Appendix~\ref{sec:40_qubit_circuits}), while for 127-qubits we had to use the whole chip.
From this we can again compute CNOT fidelity $\mathcal{F}_{CX} = (LF_1 \times LF_2 \times LF_3)^{1/144} = (0.000283)^{1/144} = 0.944850$, $EPLG = 0.055150$, and $\gamma_{CX} = 1.120146$. The results for evaluating the circuit on $ibm\_sherbrooke$, each with $10^5$ shots, are provided in Fig.~\ref{fig:127-qubit_results} and Tab.~\ref{tab:127_qubit_results}.
With the significantly lower fidelities, the number of shots required to apply the analytic CVaR bounds are significantly higher and currently impractical to run.
However, like in Sec.~\ref{sec:maxcut_40_rr3}, we see that the \emph{effective} $\gamma_{CX}$ is significantly smaller, even smaller than for the longer 40-qubit circuits. Further, we see that the noisy expectation values are still improving from $p=1$ until $p=4$ and only are starting to get worse for $p=5$.

Last, we use bootstrapping to confirm the scaling of the CVaR variance with respect to $\alpha$. More precisely, we uniformly sample $10^5$ values from the results collected using \emph{ibm\_sherbrooke} and estimate the CVaR for the five values of $\alpha_p'$ reported in Tab.~\ref{tab:127_qubit_results}. We repeat this $10^4$ times to estimate the variance of the resulting CVaR estimators. The results are provided in Fig.~\ref{fig:alpha_bootstrapping} and show close agreement with the theory presented in Sec.~\ref{sec:cvar}.

\begin{figure}
    \centering
    \includegraphics[width=\columnwidth]{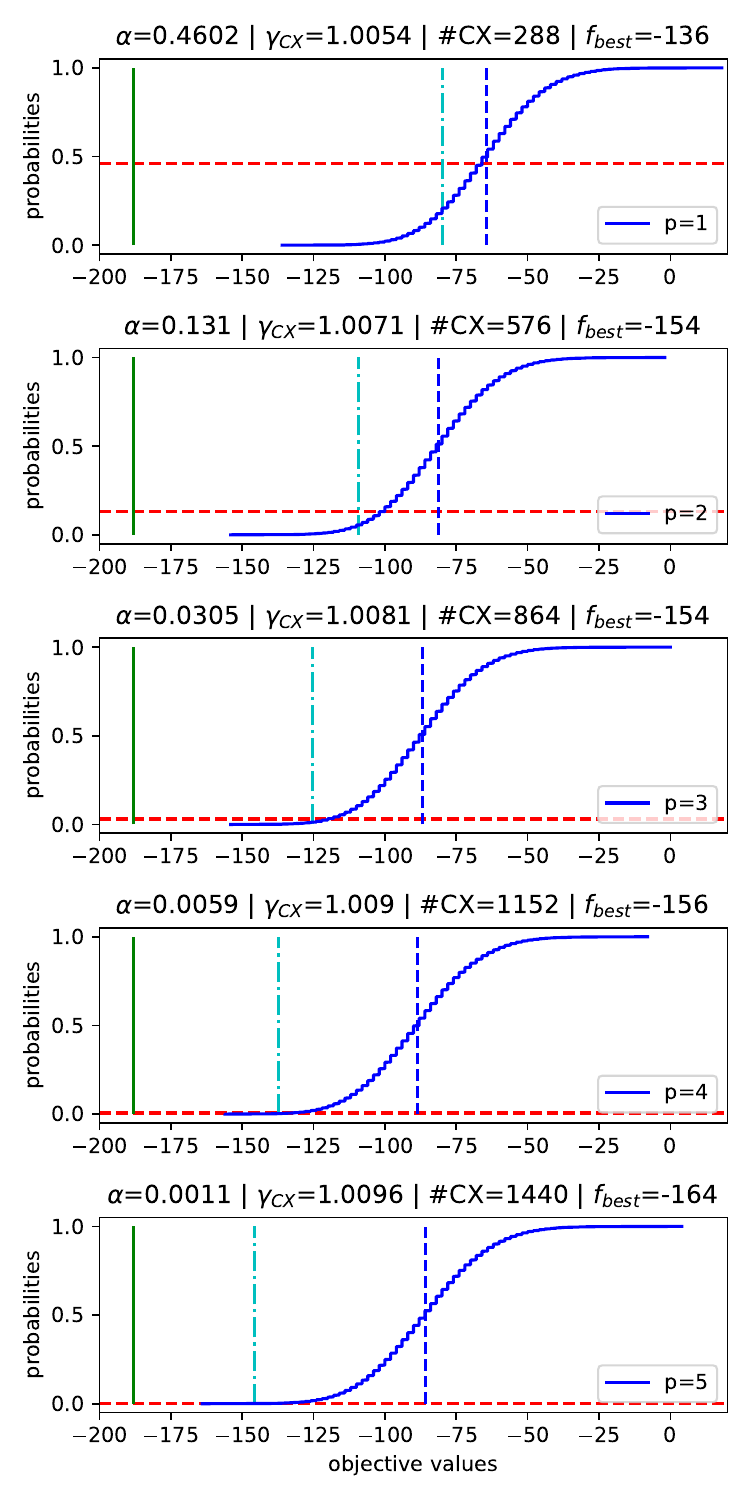}
    \caption{
    QAOA results for sampling a random hardware-compatible higher-order Ising model (minimization combinatorial optimization problem) on 127-qubits: This figure shows the resulting distributions from 127-qubit circuits executed on \emph{ibm\_sherbrooke} for $p=1,\ldots,5$ (top to bottom). The cumulative distribution functions show the values of the resulting samples from $10^5$ shots for every $p$. The vertical lines show the corresponding noisy expectation values (dashed blue), the noise-free expectation values evaluated using MPS simulation (cyan dashed-dotted), and the globally optimal solution equal to $-188$ (green solid). The title shows the fitted $\alpha_p'$ such that the $\CVaR_{\alpha_p'}$ are equal to the noise-free expectation values (i.e.~cyan dashed-dotted). The corresponding $\alpha_p'$ are indicated by the horizontal dashed red line.}    
    \label{fig:127-qubit_results}
\end{figure}

\begin{table*}
    \centering
    \begin{tabular}{lccccccccc}
        $p$ & $\#\text{CNOT}$ & $\tr(\rho H))$ & $\tr(\widetilde{\rho} H))$ & $f_{\text{best}}$ & $\sqrt{\gamma_p}$ & $\alpha_p$  & $\alpha_p'$ & $\gamma_{CX,p}'$ \\
        \hline
        1          &  288  &  -79.79 & -64.54 & -136 & $1.246 \times 10^{07}$  & $8.026 \times 10^{-08}$ & $0.4602$ & $1.0054$ \\
        2          &  576  & -109.35 & -81.11 & -154 & $1.553 \times 10^{14}$ & $6.441 \times 10^{-15}$ & $0.1310$ & $1.0071$ \\
        3          &  864  & -125.37 & -86.97 & -154 & $1.935 \times 10^{21}$ & $5.169 \times 10^{-22}$ & $0.0305$ & $1.0081$ \\
        4          & 1152  & -137.22 & -88.46 & -156 & $2.410 \times 10^{28}$ & $4.149 \times 10^{-29}$ & $0.0059$ & $1.0090$ \\
        5          & 1440  & -145.54 & -85.78 & -164 & $3.003 \times 10^{35}$ & $3.330 \times 10^{-36}$ & $0.0011$ & $1.0096$
    \end{tabular}
    \caption{QAOA results on 127-qubits: This table shows the different results for $p=1,\ldots,5$ when running QAOA on the introduced 127-qubit spin glass instance. It shows the number of CNOT gates per circuit, the noise-free and noisy expectation values, the best sampled values. Further, it shows the overall $\sqrt{\gamma_p}$ for the circuits and corresponding $\alpha_p$ derived from the LF as well as the $\gamma_{CX,p}'$ and $\alpha_p'$ derived from calibrating the CVaR on the noise-free expectation values.}
    \label{tab:127_qubit_results}
\end{table*}

\begin{figure}
    \centering
    \includegraphics[width=\columnwidth]{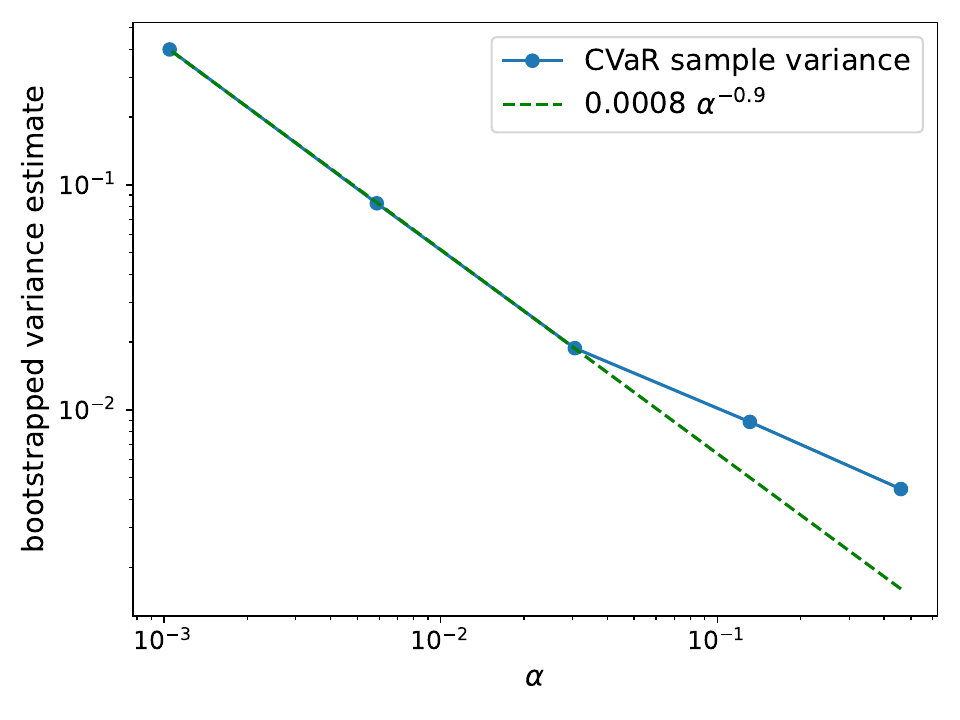}
    \caption{Variance of CVaR estimates: We draw $10^5$ uniform samples from the original data to estimate the CVaR for $\alpha_p'$, $p=1, \ldots 5$, cf.~Tab.~\ref{tab:127_qubit_results}, and repeat this $10^4$ times to get an estimate of the variance of the CVaR estimator. The dashed green line is fitted to the results for $p=3, \ldots, 5$, and is very close to the predicted behavior of $\mathcal{O}(1/\alpha)$.}
    \label{fig:alpha_bootstrapping}
\end{figure}

\section{Conclusion
\label{sec:conclusion}}

We examined how hardware noise affects the quality of bit strings sampled from quantum circuits on noisy quantum computers. We proved and demonstrated that the noise can be compensated by increasing the number of samples inversely proportional to the circuit's layer fidelity, or equivalently, proportional to $\sqrt{\gamma}$.
This is considerably less than that required for error mitigation strategies like probabilistic error cancellation, which scales as $\gamma^2$, however, to achieve unbiased estimators of expectation values instead of bounds.
Furthermore, we proved that the Conditional Value at Risk provides bounds on noise-free expectation values using noisy samples, providing the theoretical foundation for CVaR as a loss function in variational algorithms, and thus, closing a gap in the literature.
We also discussed the potential of this theory to benefit other algorithms, such as Quantum Support Vector Machines or approximate Variational Quantum Time Evolution.

Our primary focus was on errors occurring during circuit execution. However, other error sources, notably State Preparation and Measurement (SPAM) errors, also affect performance on noisy devices. The methodologies developed in this paper can be adapted to account for SPAM errors, either by increasing sampling overhead or applying other mitigation techniques, like statistical readout error mitigation.
The latter may allow to mitigate certain errors without added sampling overhead but might require additional calibration circuits. Investigating the impact of SPAM errors remains an intriguing direction for future research.
\\
\\
\noindent\textbf{Acknowledgments.}
The authors want to thank Almudena Carrera Vazquez, Julien Gacon, Youngseok Kim, David McKay, Diego Rist\`e, David Sutter, Kristan Temme, Minh Tran, and James Wootton for insightful discussions and recommendations to improve the theoretical and experimental results as well as the whole manuscript. Further, M.L.~and S.W.~acknowledge the support of the Swiss National Science Foundation, SNF grant No.~214919.
E.P., A.B.,~and S.E ~acknowledge the support of (i) the Beyond Moore's Law thrust of the Advanced Simulation and Computing Program (NNSA ASC) at Los Alamos National Laboratory (LANL), which is operated by Triad National Security, LLC, for the National Nuclear Security Administration of U.S. Department of Energy (Contract No. 89233218CNA000001), and (ii) LANL's Institutional Computing program. LANL report LA-UR-23-33295.

\appendix

\section{Assumption of Pauli noise}
\label{sec:beyond_pauli}

Within the theory of the paper we made the simplifying assumption of Pauli noise. This assumption is not given in general.
Suppose a Clifford quantum circuit layer $\mathcal{U}(\cdot) = U \cdot U^{\dagger}$ on $n$ qubits and its noisy version $\widetilde{\mathcal{U}} = \mathcal{U} \circ \Lambda$. A more realistic description of the noise is given by
\begin{eqnarray}
    \Lambda(\rho) &=& \sum_i A_i \rho A_i^{\dagger}\, , \label{eq:general_noise}
\end{eqnarray}
where the $A_i$ are Kraus operators \cite{nielsen_and_chuang}, which leads to
\begin{eqnarray}
    \widetilde{\mathcal{U}}(\rho) &=& \sum_i A_i U \rho U^{\dagger} A_i^{\dagger}\, . \label{eq:kraus_noise}
\end{eqnarray}
Applying Pauli twirling \cite{knill_randomized_2008, dankert_exact_2009, magesan_scalable_2011}, i.e., averaging over $\mathcal{U}$ conjugated by each element of the Pauli group on $n$ qubits yields
\begin{eqnarray}
    \widetilde{\mathcal{U}}_{\text{twirled}}(\rho) &=& 
    \frac{1}{4^n} \sum_{i, j} Q_j A_i U P_j \rho P_j U^{\dagger} A_i^{\dagger} Q_j\, , \label{eq:kraus_noise_U}
\end{eqnarray}
for Paulis $P_j, Q_j$ with $Q_j U P_j = U$ for all $j=1, \ldots, 4^n$.
This is known to translate the more general noise given in \eqref{eq:kraus_noise} on average to a Pauli noise model as given in \eqref{eq:noise_model}.
In practice, we do not enumerate all $4^n$ Paulis, but uniformly sample from them and apply a certain number of random Paulis to approximate the average.

Suppose now we have a noise model that-on average-looks like Pauli noise. Then, expectation values $\tr(\rho H)$ will have the same value in case of a true Pauli noise model as well as in case of a twirled general model.
That also holds if we set $H = \proj{x}$, i.e., we evaluate the probability of sampling $\ket{x}$. However, if we estimate the same sampling probability for the actual Pauli noise model and the twirled noise model the sampling probabilities also must be the same. 

For the experiments in Sec.~\ref{sec:experiments}, we omitted twirling.
There are some special cases of noise models where we know the theory holds exactly the same.
For instance, suppose \emph{stochastic noise} \cite{wallman2016bounding} $A_0 \sim \sqrt{I}$ and all other $A_i$, for $i > 0$ are orthogonal to $A_0$. Then, it can be easily seen that the probability of having no error is equal to the probability of the Pauli noise resulting after twirling, i.e., equal to $1/\sqrt{\gamma}$.
While we can always construct a noise channel with all orthogonal Kraus operators, it is not guaranteed that the identity is part of it.
In general, we can only say that the probability of no error in the general noise model is less than or equal to $1/\sqrt{\gamma}$ \cite{Wallman_2016, wallman2016bounding}.

However, it seems that the gap between the twirled and untwirled circuits is very small in the considered cases. We demonstrate the this by comparing the twirled and untwirled cases by comparing the resulting distributions. In Fig.~\ref{fig:twirl-comp} we show the experimental distributions when sampling from the \emph{ibm\_sherbrooke} device the same 127-qubits circuits discussed in Sec.~\ref{sec:127_poly3}. This shows a close agreement with and without twirling.

We note that the observed distributions in Fig.~\ref{fig:twirl-comp} deviate slightly from those presented in Fig.~\ref{fig:127-qubit_results}. This is because in order to twirl the circuits, we need to insert additional single qubit gates, which contribute to a slightly deeper circuit, here, about 8\% longer in the pulse schedule duration than the original circuits. In some cases this could be reduced by combining the twirling gates with other single qubit gates. However, if the additional gates are inserted, e.g., in between two CNOT gates, this is not possible. The circuits for the untwirled case have the same structure as the twirled case, except that the sampled twirling gates are constant, so that there is a fair comparison between the two due to the additional circuit duration.

We also note that the minimum values of the objective functions for the twirled case are lower than the untwirled case. However, since the opposite is true for the mean value of the objective function, we believe this may be due to sampling statistics, as in each of these cases the minimum objective value was only sampled only once.
If we determine $\alpha_p'$ as before for each case, we find that the twirled and untwirled values agree well for each $p$, and are well within a standard deviation of each other (determined by bootstrapping the observed bitstrings). This is summarized in Tab.~\ref{table:obj-vals-127}.

\begin{figure}[!ht]
    \centering
    \includegraphics[width=\columnwidth]{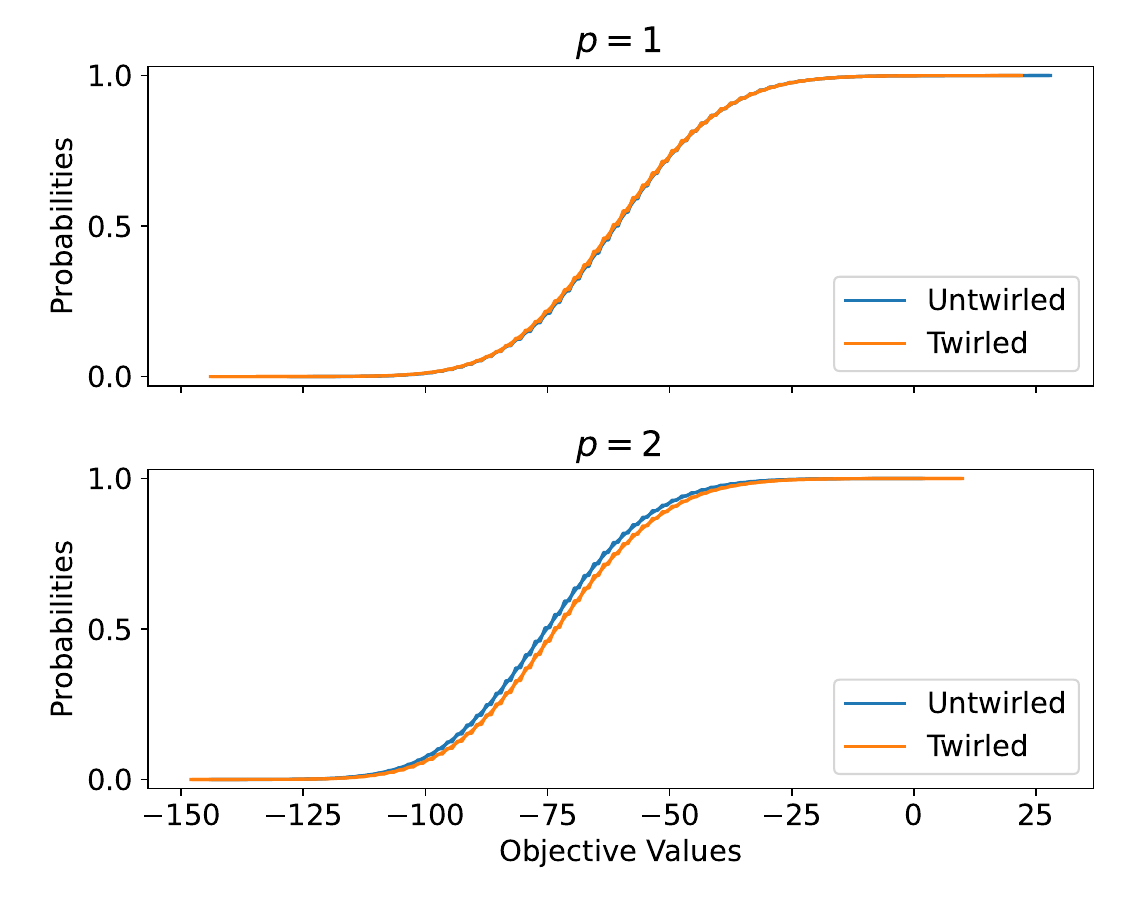}
    \caption{QAOA results on 127 qubits run on \emph{ibm\_sherbrooke}. This figure shows the cumulative distribution functions (CDFs) for depths $p=1,2$ with and without twirling. The orange lines correspond to the twirled circuits, and the blue lines correspond to the untwirled circuits. For each method, $100,000$ shots were used in total. When twirling, we sampled $1,000$ random twirls and performed $100$ shots for each. The statistics of these distributions are summarized in Table~\ref{table:obj-vals-127}.}
    \label{fig:twirl-comp}
\end{figure}

\begin{table}[h!]\label{table:obj-vals-127}
\centering
\begin{tabular}{clccccc}
p & Twirling & $\text{tr}(\rho H)$ & $\text{tr}(\widetilde{\rho}H)$ & $f_{\text{best}}$ & $\alpha_p'$ \\
\hline
\midrule
\multirow[c]{2}{*}{1} & No  & \multirow[c]{2}{*}{-79.8}  & -60.8 & -128 & 0.147 \ (7.9\%) \\
                      & Yes &                            & -60.9 & -144 & 0.152 \ (7.7\%) \\
\hline
\multirow[c]{2}{*}{2} & No  & \multirow[c]{2}{*}{-109.4} & -74.9 & -144 & 0.0202 \ (22.4\%) \\
                      & Yes &                            & -72.9 & -148 & 0.0160 \ (25.7\%) \\
\bottomrule
\end{tabular}
\caption{Values of the objective function obtained with and without twirling for QAOA depths $p=1,2$, corresponding to the distributions shown in Fig.~\ref{fig:twirl-comp}. The noise-free values are the expectation values of the observable obtained using classical MPS simulations (rounded to one decimal place) \cite{heavy_hex_QAOA_parameter_transfer2023}. The standard deviations for $\alpha_p'$ (shown as percent of the nominal value) are determined by bootstrapping over the observed bitstrings. We note that the $\alpha_p'$ here are lower than those in Fig.~\ref{fig:127-qubit_results} for the same reason that the observed values differ, as described in Appendix~\ref{sec:beyond_pauli}. Nevertheless, the qualitative conclusions still hold, and the twirled and untwirled cases agree well.}
\end{table}

\section{Probabilistic Error Cancellation \& Sampling}
\label{sec:sampling_pec}

In this section, we discuss how applying PEC \cite{berg2023probabilistic} to quantum circuits affects the resulting sampling probabilities. PEC consists of two steps: learning the noise when running a quantum circuit on a particular quantum device, and then, mitigating the noise to get an unbiased estimator of an expectation value. Here, we assume we have learned the noise already and focus on the error mitigation.
Given a noise model $\Lambda$, PEC constructs a \emph{Quasiprobability Decomposition} (QPD) to implement the inverse noise by combining multiple weighted quantum circuits.

In a QPD, a quantum operation $\mathcal{U}$ is implemented as a linear combination of other (possibly noisy) operations $\mathcal{E}_i$, $i = 1,\ldots,M$,
\begin{eqnarray} 
    \mathcal{U}(\cdot) &=& \sum_{i=1}^M a_i \mathcal{E}_i(\cdot) \, ,
    \label{eq:qpd_definition}
\end{eqnarray}
where $a_i \in \mathbb{R}$, $\mathcal{U}(X) = U X U^\dagger$, $\mathcal{E}_i$ denote (noisy) operations, and $\sum_{i=1}^M a_i = 1$.
This has first been proposed in the context of error mitigation \cite{Temme_2017}, where $\mathcal{U}$ is assumed to be a noise-free operation and $\mathcal{E}_i$ are noisy operations that can be implemented on a noisy device. If this is being applied to multiple gates and qubits, the number of necessary operations $M$ explodes exponentially.
Thus, instead of enumerating all of them, one rewrites \eqref{eq:qpd_definition} as
\begin{eqnarray}
  \mathcal{U}(\cdot) &=& \gamma \sum_{i=1}^M p_i s_i \mathcal{E}_i(\cdot)\,, \label{eq:qpd_U}
\end{eqnarray}
where $\gamma = \|a\|_1 \geq 1$, $p_i = |a_i|/\gamma$, and $s_i = \text{sign}(a_i)$,
and samples from the probability distribution defined through $p_i$.
Suppose we are interested in estimating $\braket{H} = \tr(\mathcal{U}(\rho)H)$ for some initial state $\rho$ and observable $H$.
Then, we can use the QPD to write
\begin{eqnarray}
    \tr(\mathcal{U}(\rho)H) &=&
    \gamma \sum_{i=1}^M p_i s_i \tr(\mathcal{E}_i(\rho)H).
\end{eqnarray}
Thus, instead of enumerating all $M$ circuits, we can sample from $p_i$, and only evaluate the sampled circuits corresponding to $i$, to get an unbiased estimator for $\braket{H}$.
However, the variance of this estimation is amplified by $\gamma^2$, i.e., $\gamma^2$-times more samples are needed than for the original noise-free circuit to achieve an estimate of the same accuracy.
The sampling overhead $\gamma^2$ grows exponentially in the number of qubits and depth of the circuit, and thus, can be prohibitively large for circuits beyond a certain circuit size and noise levels.

While PEC has only been considered for the estimation of expectation values, it also generates samples from every random circuit that is measured. However, we will show that this essentially amplifies the noise and increases the sampling overhead compared to the results presented within this paper.
To this extent, we introduce the following mixed state introduced by PEC:
\begin{eqnarray}
    \rho_{\text{PEC}} &=& \sum_{i=1}^M p_i  \mathcal{E}_i(\rho), \label{eq:rho_pec}
\end{eqnarray}
for some initial state $\rho$.
The state $\rho_{\text{PEC}}$ is achieved by dropping the factor $\gamma$ as well as the signs $s_i$ from \eqref{eq:qpd_U}.
This allows us to state the following lemma.

\begin{lemma} \label{lemma:pec_sampling_prob}
Suppose a $n$-qubit state $\rho = U \proj{0} U^{\dagger}$, where $U$ is some unitary, with
\begin{eqnarray}
  \tr( \rho \proj{x} ) &=& p_x \geq 0, \label{eq:p_x}
\end{eqnarray}
for a computational basis state $\ket{x}$, $x\in\{0, 1\}^n$. 

Further, suppose that $U$ can be error-mitigated on a noisy device by using PEC with corresponding $\gamma \geq 1$ and denote the resulting mixed state introduced in \eqref{eq:rho_pec} by $\rho_{\text{PEC}}$.
Then, the probability of measuring $\ket{x}$ on the noisy devices using PEC is lower bounded by
\begin{eqnarray}
  \tr( \rho_{\text{PEC}} \proj{x} ) = p_x^{\text{PEC}}&\geq& p_x / \gamma. \label{eq:p_x_lb}
\end{eqnarray}
\end{lemma}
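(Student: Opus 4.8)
The plan is to reduce the claim to the quasiprobability identity \eqref{eq:qpd_U} together with the elementary bound $s_i \leq 1$. The one point I would make explicit first is the role of the states: the input fed to the sampled channels $\mathcal{E}_i$ is $\proj{0}$, so that $\rho = U\proj{0}U^{\dagger} = \mathcal{U}(\proj{0})$ is the ideal \emph{output}, while the mixed state of \eqref{eq:rho_pec} reads $\rho_{\text{PEC}} = \sum_{i=1}^M p_i\, \mathcal{E}_i(\proj{0})$. With this convention the same arguments $\mathcal{E}_i(\proj{0})$ appear in both $p_x$ and $p_x^{\text{PEC}}$, which is what makes the comparison work term by term.

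Concretely, I would evaluate the decomposition \eqref{eq:qpd_U} on the input $\proj{0}$ against the projector $\proj{x}$ to obtain
\begin{align}
    p_x = \tr(\mathcal{U}(\proj{0})\proj{x}) = \gamma \sum_{i=1}^M p_i s_i \, \tr(\mathcal{E}_i(\proj{0})\proj{x}).
\end{align}
Setting $q_i := \tr(\mathcal{E}_i(\proj{0})\proj{x})$, each $q_i \geq 0$ since $\mathcal{E}_i(\proj{0})$ is a density operator and $\proj{x}$ a projector (indeed $q_i$ is just the probability of sampling $\ket{x}$ from the $i$-th noisy circuit). In the same notation the PEC sampling probability is $p_x^{\text{PEC}} = \tr(\rho_{\text{PEC}}\proj{x}) = \sum_i p_i q_i$.

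The bound is then immediate: dividing the displayed identity by $\gamma$ gives $p_x/\gamma = \sum_i p_i s_i q_i$, and since $s_i \in \{-1,+1\}$ and $q_i \geq 0$ we have $s_i q_i \leq q_i$, whence, using $p_i \geq 0$,
\begin{align}
    \frac{p_x}{\gamma} = \sum_{i} p_i s_i q_i \leq \sum_{i} p_i q_i = p_x^{\text{PEC}},
\end{align}
which is exactly \eqref{eq:p_x_lb}. I do not anticipate a genuine obstacle here: the entire content sits in the QPD normalization $\sum_i p_i = 1$ together with $\gamma = \|a\|_1$, which force the signs $s_i$ to appear and therefore yield the weaker rate $p_x/\gamma$ rather than the $p_x/\sqrt{\gamma}$ of Eq.~\eqref{eq:prob_lower_bound} --- this is precisely the ``amplification'' of noise by PEC noted in the main text. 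The only care needed is to keep $\proj{0}$ (and not $\rho$) as the argument of the channels, so that the same $q_i$ enter both expressions and the comparison is literally term-by-term.
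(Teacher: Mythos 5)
Your proof is correct and follows essentially the same route as the paper: evaluate the quasiprobability decomposition on $\proj{0}$ against $\proj{x}$, observe that each $\tr(\mathcal{E}_i(\proj{0})\proj{x}) \geq 0$, and bound the signed sum $\gamma^{-1}p_x = \sum_i p_i s_i q_i$ by the unsigned sum $\sum_i p_i q_i = p_x^{\text{PEC}}$. The paper dresses the same inequality in the language of random variables $Y_i \in \{-1,0,+1\}$ and the law of total expectation (its $q_i^{+1} - q_i^{-1}$ versus $q_i^{+1} + q_i^{-1}$ is exactly your $s_i q_i \leq q_i$), so the difference is purely presentational; your direct algebraic version is, if anything, cleaner.
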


\begin{proof}
Consider the QPD resulting from PEC
\begin{align} \label{eq_QDP}
    \mathcal{U}(\cdot) = \sum_{i=1}^M a_i \mathcal{E}_i(\cdot) \, .
\end{align}
Using~\eqref{eq_QDP} we can write
\begin{align} \label{eq:pec_qpd}
p_x 
= \tr( \rho \proj{x} )
= \sum_{i=1}^M a_i \tr( \mathcal{E}_i(\proj{0})\proj{x})\, .
\end{align}
By defining $\gamma = \| a \|_1$, $p_i = |a_i|/\gamma$, and $s_i = \text{sign}(a_i)$, we can rewrite \eqref{eq:pec_qpd} as
\begin{align}
  p_x =  \gamma \sum_{i=1}^M p_i s_i \tr( \mathcal{E}_i(\proj{0})\proj{x})\,. \label{eq:pec_qpd_2}
\end{align}
Further, $s_i \tr( \mathcal{E}_i(\proj{0})\proj{x})$ allows us to define a random variable $Y_i \in \{-1, 0, +1\}$ that equals $\pm 1$ if we measure $\mathcal{E}_i(\proj{0})$ and obtain $\ket{x}$, where the sign is determined by $s_i$, and $0$ otherwise. The random variable $Y_i$ satisfies $\mathbb{E}[Y_i] = s_i \tr( \mathcal{E}_i(\proj{0}) \proj{x})$. 
We denote the probabilities of $Y_i$ taking the values $-1, 0, +1$ by $q_i^{-1}, q_i^0, q_i^{+1} \geq 0$, respectively.
Note that by construction, for each $i$ only one of $q_i^{-1}, q_i^{+1}$ can be larger than zero. 

In addition, let the probabilities $p_i$ define a random variable $I \in \{1, \ldots, M\}$. Then, by the law of total expectation, we get 
\begin{align}
\gamma \mathbb{E}[Y_I]
&= \gamma \sum_{i=1}^M \mathbb{E}[Y_i|i] \mathbb{P}[i] \\
&= \gamma \sum_{i=1}^M p_i s_i \tr( \mathcal{E}_i(\proj{0})\proj{x}) \\
&= p_x \, .
\end{align}
This can be rewritten as
\begin{eqnarray}
     \sum_{i=1}^M p_i \left( q_i^{+1} - q_i^{-1} \right) &=& \frac{p_x}{\gamma}\,. \label{eq:p_x_pec}
\end{eqnarray}

The total probability to measure $\ket{x}$ when applying PEC, independent of the sign of $Y_I$, is then given by
\begin{eqnarray}\label{eqn:lower_bound}
     \sum_{i=1}^M p_i \left( q_i^{+1} + q_i^{-1} \right) &\geq& \frac{p_x}{\gamma}, \label{eq:p_x_pex_lb}
\end{eqnarray}
where the lower bound follows immediately from \eqref{eq:p_x_pec}, and the right-hand-side is exactly the probability of measuring $\ket{x}$ for state $\rho_{\text{PEC}}$.
\end{proof}

If we compare the result from Lemma~\ref{lemma:pec_sampling_prob} with the lower bound presented in \eqref{eq:prob_lower_bound}, we see that PEC implies the squared overhead compared to direct sampling. Further, this implies that CVaR-based approaches may significantly reduce the overhead to achieve insightful results, particularly when combined with problem structure to filter noisy samples.

\section{Variance of Estimating the CVaR}
\label{sec:cvar_variance}

In this section, we present a short exposition on how to estimate CVaR. We will first state the following lemma.

\begin{lemma}\label{lemma:cvar_sampling}
    Let $X_1,\dots,X_n$ be i.i.d.~ copies of $X$ (with $X$ integrable) and let $X_{(1)},\dots,X_{(n)}$ be their order statistic. For $\alpha \in (0,1]$ let $E_n = (X_{(1)}+\cdots+X_{(\lfloor \alpha n\rfloor)})/\lfloor \alpha n\rfloor$. Then
    \begin{align*}
        \mathbb{E}[E_n] \to \CVaR_\alpha(X) \quad \text{as }n\to \infty \, .
    \end{align*}
    If $X$ is square integrable and $F_X(x_\alpha)=\alpha$,
    \begin{align*}
        \sqrt{n} (E_n - \CVaR_\alpha(X))
        \to N(0,\CVaRv_\alpha(X))
    \end{align*}
    in distribution as $n\to \infty$ where here $\CVaRv_\alpha(X) := \alpha^{-1} \Var[X \mid X \le x_\alpha]$ is the limiting variance.
\end{lemma}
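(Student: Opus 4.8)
The plan is to combine the Rockafellar--Uryasev dual representation of the CVaR with an M-estimation / influence-function argument. Recall that the lower CVaR admits the variational form $\CVaR_\alpha(X) = \sup_{t\in\mathbb{R}}\{t - \alpha^{-1}\mathbb{E}[(t-X)^+]\}$, whose supremum is attained at $t=x_\alpha$, and the equivalent quantile form $\CVaR_\alpha(X) = \alpha^{-1}\int_0^\alpha Q(u)\,du$ with $Q(u)=\inf\{x\colon F_X(x)\geq u\}$. For the first claim (convergence of the mean) I would recognize $E_n$ as a Riemann-type average of the empirical quantile function: with $k=\lfloor\alpha n\rfloor$ one has $E_n = \tfrac{n}{k}\int_0^{k/n} Q_n(u)\,du$, where $Q_n(u)=X_{(\lceil nu\rceil)}$. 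Glivenko--Cantelli gives $Q_n\to Q$ and $k/n\to\alpha$, and the integrability of $X$ supplies the uniform integrability needed to pass to $\mathbb{E}[E_n]\to\alpha^{-1}\int_0^\alpha Q(u)\,du = \CVaR_\alpha(X)$.

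For the central limit theorem I would pass through the empirical objective $\psi_n(t)=t-\alpha^{-1}\tfrac1n\sum_{i=1}^n (t-X_i)^+$ and the sample quantile $\hat q_n = X_{(k)}$. The key observation is that $E_n$ agrees with $\psi_n(\hat q_n)$ up to an $O_p(1/n)$ correction stemming from $k/n\neq\alpha$, so it suffices to analyze $\psi_n(\hat q_n)$. Because the population objective $\psi(t)=t-\alpha^{-1}\mathbb{E}[(t-X)^+]$ is stationary at $x_\alpha$ --- this is exactly where the hypothesis $F_X(x_\alpha)=\alpha$ enters, guaranteeing $\psi'(x_\alpha)=1-\alpha^{-1}F_X(x_\alpha)=0$ --- the estimation error $\hat q_n - x_\alpha = O_p(n^{-1/2})$ feeds into $\psi_n(\hat q_n)$ only at second order. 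This \emph{envelope} argument reduces $\sqrt n(E_n - \CVaR_\alpha(X))$ to $\sqrt n(\psi_n(x_\alpha)-\psi(x_\alpha))$, a normalized sum of i.i.d.\ mean-zero terms with influence function $\mathrm{IF}(x)=-\alpha^{-1}[(x_\alpha-x)^+-\mathbb{E}(x_\alpha-X)^+]$, to which the classical CLT applies.

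The step I expect to be the main obstacle is identifying the limiting variance and matching it to $\CVaRv_\alpha(X)$. The influence-function variance is $\mathbb{E}[\mathrm{IF}(X)^2]=\alpha^{-2}\Var[(x_\alpha-X)^+]$, and expanding the truncated second moment under $F_X(x_\alpha)=\alpha$ gives $\alpha^{-1}\Var[X\mid X\leq x_\alpha]+\alpha^{-1}(1-\alpha)(x_\alpha-\CVaR_\alpha(X))^2$. Thus the direct computation carries a quantile-fluctuation term beyond the stated $\alpha^{-1}\Var[X\mid X\leq x_\alpha]$, and the delicate part of a rigorous argument is to account for this boundary contribution: one would need regularity of $F_X$ at $x_\alpha$ (a positive density, so that a Bahadur/Vervaat representation of $\hat q_n$ is available and its effect on the tail L-statistic can be tracked exactly) and then either exhibit the extra term or verify that it is negligible in the regime of interest. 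I would therefore devote the bulk of the effort to a careful expansion of the tail L-statistic $E_n$, since this is where the precise variance constant --- and any reconciliation with $\CVaRv_\alpha$ --- is ultimately decided.
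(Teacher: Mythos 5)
Your proposal necessarily departs from the paper, because the paper only proves the first claim by hand and outsources the second: it writes $E_n$ exactly in terms of the empirical CDF as $\frac{1}{\lfloor\alpha n\rfloor}\sum_i X_i\min\{(\lfloor\alpha n\rfloor-n\hat F_n(X_i-))_+/(n\Delta\hat F_n(X_i)),1\}$, conditions on $X_1$, and concludes by the strong law and dominated convergence --- an argument that, unlike your quantile-function route, needs no continuity and handles atoms explicitly --- while the CLT is dispatched with a citation. Your part-1 argument via $E_n=\frac{n}{k}\int_0^{k/n}Q_n(u)\,du$ is a valid alternative, provided you justify $\int_0^{k/n}Q_n\to\int_0^\alpha Q$ with a Scheff\'e-type argument (since $Q_n$ is not dominated near $u=0$) and get uniform integrability of $E_n$ from $|E_n|\le\frac{n}{\lfloor\alpha n\rfloor}\cdot\frac1n\sum_i|X_i|$ together with $L^1$-convergence of the sample mean of $|X|$.

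The substantive point is your part 2, and there your computation is correct; moreover, the hedge in your last paragraph should be resolved in the negative: the extra term is real and is \emph{not} negligible. The envelope argument (valid under a positive density at $x_\alpha$) gives the asymptotically linear representation with influence function $-\alpha^{-1}[(x_\alpha-x)^+-\mathbb{E}(x_\alpha-X)^+]$, hence limiting variance
\begin{align*}
\alpha^{-2}\Var\!\left[(x_\alpha-X)^+\right]
&=\alpha^{-1}\Var[X\mid X\le x_\alpha]
+\alpha^{-1}(1-\alpha)\left(x_\alpha-\CVaR_\alpha(X)\right)^2 ,
\end{align*}
which strictly exceeds the stated $\CVaRv_\alpha(X)$ unless $x_\alpha=\CVaR_\alpha(X)$. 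This is not an artifact of your method: for $X$ uniform on $[0,1]$ and $\alpha=1/2$, the classical L-statistic variance formula $\alpha^{-2}\int\!\!\int_{(-\infty,x_\alpha]^2}[F(x\wedge y)-F(x)F(y)]\,dx\,dy$ evaluates to $5/48$, whereas $\CVaRv_{1/2}(X)=1/24$. So a careful execution of your plan proves the CLT with the larger variance, and the lemma as stated is incorrect in general (it also needs regularity beyond $F_X(x_\alpha)=\alpha$: the paper's own Bernoulli analysis at $\alpha=p$ satisfies that hypothesis yet exhibits a censored-normal, non-Gaussian limit, where the stated formula would predict a degenerate one). Possibly the cited reference contains the full variance and the cross term was lost in transcription. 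Note this does not undermine the paper's downstream use of the lemma: for the Gaussian case the extra term is itself $\Theta(1/(\alpha x_\alpha^2))$, so the advertised $\mathcal{O}(1/\alpha)$ scaling of the CVaR estimator's variance survives; only the constant changes.
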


To estimate $\overline{\CVaR}_\alpha(X)$, we use the estimator $\overline{E}_n = (X_{(n-\lfloor \alpha n\rfloor+1)}+\cdots+X_{(n)})/\lfloor \alpha n\rfloor$ and obtain analogous results.

\begin{proof}
    Recall $F_X(x) = \mathbb{P}[X \leq x]$ and define $F_X(x-) = \mathbb{P}[X < x]$. We make the following definitions for (left limits) of empirical cumulative distribution functions:
    \begin{align*}
        \hat{F}_n(x) &= \#\{ i \le n \colon X_i \le x\}/n \,,\\
        \hat{F}_n(x-) &= \#\{ i \le n \colon X_i < x\}/n \,.
    \end{align*}
    Also let $\Delta F_X(x) = F_X(x) - F_X(x-)$ and $\Delta \hat{F}_n(x) = \hat{F}_n(x) - \hat{F}_n(x-)$. The key observation is that
    \begin{align*}
        E_n = \frac{1}{\lfloor \alpha n\rfloor} \sum_{i=1}^n X_i \min\left\{\frac{(\lfloor \alpha n\rfloor - n\hat{F}_n(X_i-))_+}{n\Delta \hat{F}_n(X_i)},1 \right\} \,.
    \end{align*}
    Indeed, any $x\in \mathbb{R}$ will appear in the sum defining $\lfloor \alpha n\rfloor E_n$ precisely $\min\{(\lfloor \alpha n\rfloor - n\hat{F}_n(x-))_+, n\Delta \hat{F}_n(x)\}$ times; the $n\Delta \hat{F}_n(X_i)$ in the denominator above takes care of overcounting. Now
    \begin{align*}
        &\mathbb{E}[E_n] = \frac{n}{\lfloor \alpha n\rfloor} \mathbb{E} \left[X_1 \min\left\{\frac{(\lfloor \alpha n\rfloor - n\hat{F}_n(X_1-))_+}{n\Delta \hat{F}_n(X_1)},1 \right\} \right] \\
        &= \frac{n}{\lfloor \alpha n\rfloor} \mathbb{E} \left[A_n(X_1) \right]
    \end{align*}
    where
    \begin{align*}
        A_n(x) := x \mathbb{E} \left[\min\left\{\frac{(\lfloor \alpha n\rfloor - n\hat{F}_{n-1}(x-))_+}{1+n\Delta \hat{F}_{n-1}(x)},1 \right\} \right] \,.
    \end{align*}
    The first equality above follows from the linearity of the expectation and the i.i.d.~property of $X_1,\dots,X_n$ and the second equality follows from conditioning on $X_1$. Using the strong law of large numbers we have $(\hat{F}_n(x),\hat{F}_n(x-),\Delta \hat{F}_n(x)) \to (F_X(x),F_X(x-),\Delta F_X(x))$ a.s.\ as $n\to \infty$. By separately considering the $\Delta F_X(x) = 0$ and $\Delta F_X(x) > 0$ cases we get
    \begin{align*}
        A_n(x) &\to x \cdot 1(F_X(x) < \alpha) \\
        &\qquad + x\frac{\alpha - F_X(x-)}{\Delta F_X(x)} \,1(\alpha \in (F_X(x-), F_X(x)))
    \end{align*}
    as $n\to \infty$ unless $\alpha = F_X(x)=F_X(x-)$; however we have $\mathbb{P}[\alpha = F_X(X_1)=F_X(X_1-)]=0$ so this case does not matter to evaluate the limit of $\mathbb{E}[E_n]$. Thus by dominated convergence
    \begin{align*}
        \mathbb{E}[E_n] &\to \alpha^{-1} \mathbb{E}[X_1 ; F_X(X_1) < \alpha] \\
        &\qquad + \sum_{x\colon \alpha \in (F_X(x-), F_X(x))} x(1 - \alpha^{-1} F_X(x-))\\
        &= \CVaR_\alpha(X)
    \end{align*}
    as $n\to \infty$. The second claim on the central limit theorem is a special case of \cite{cvarestimator_clt}.
\end{proof}

Let us make the following remark on monotonicity: If $\phi\colon \mathbb{R}\to \mathbb{R}$ is non-decreasing and $\phi(X)$ is integrable, then
\begin{align*}
    0 &\geq \mathbb{E}[(\phi(X)-\phi(X'))(1(X\leq x) - 1(X'\leq x))] \\
    &= 2\cdot \mathbb{E}[\phi(X);X\leq x] - 2\cdot \mathbb{E}[\phi(X)] \mathbb{P}[X\leq x] \,.
\end{align*}
By applying this to $\phi(x) = x$ and $x=x_{\alpha}$ we see that $\CVaR_\alpha(X)\leq \mathbb{E}[X]$. Furthermore, by replacing $X$ by a random variable sampled from the law of $X$ conditioned on $X\leq x_{\alpha'}$ for $\alpha'>\alpha$ we can deduce that $\CVaR_\alpha(X)$ is non-decreasing in $\alpha$. Much more crudely, we can bound $\CVaRv_\alpha(X)\leq \alpha^{-1}\mathbb{E}[X^2]/\mathbb{P}[X\leq x_\alpha] \leq \mathbb{E}[X^2]/\alpha^2$.

In the following, we analyze behavior of the limiting distribution of the estimator $E_n$ in some concrete cases.

In the case where $X$ has a Bernoulli distribution with success probability $p$, we observe that $\overline{E}_n$ has the same distribution as $\min\{B_n/\lfloor \alpha n\rfloor,1\}$ where $B_n$ is Binomial distributed with parameter $(n,p)$. An application of the central limit theorem thus yields
\begin{align*}
    &\sqrt{n}(E_n - \min\{p/\alpha, 1\}) \\
    &\to \begin{cases}
        \alpha^{-1}\sqrt{p(1-p} N &\colon \alpha > p\,\\
        \sqrt{(1-p)p^{-1}}\,N\cdot 1(N\geq 0) &\colon \alpha = p,\\
        0 &\colon \alpha < p\,.
    \end{cases}
\end{align*}
in distribution as $n\to \infty$ where $N$ is a standard normal random variable.

To analyze the case where $N\sim N(0,1)$, it will be useful to recall the following asymptotic expansion \cite[(8.11(i))]{nist_dlmf} of incomplete Gamma functions:
\begin{align*}
    \Gamma(a, y) &:= \int_y^\infty s^{a-1} e^{-s} \,ds \\
    &= y^{a-1}e^{-y} \left( \sum_{k=0}^{n-1} \frac{(a-1)\cdots (a-k)}{y^k} + O(y^{-n})\right)
\end{align*}
as $y \to \infty$ for any fixed $n \ge 1$ and $a>0$. In particular as $x \to \infty$,
\begin{align*}
    \frac{\Gamma(1/2,x^2/2)}{\sqrt{2}} &= \frac{e^{-x^2/2}}{x} \left( 1 - \frac{1}{x^2} + \frac{3}{x^4} + O(x^{-6})\right),\\
    \frac{\sqrt{2}}{\Gamma(1/2,x^2/2)} &= x e^{x^2/2} \left( 1 + \frac{1}{x^2} - \frac{2}{x^4} + O(x^{-6})\right).
\end{align*}
Let $x_\alpha = F_N^{-1}(\alpha)$ and write $f_N=F_N'$ for the density of $N$. By \cite[(7.17(iii))]{nist_dlmf} we get the asymptotic relationship
\begin{align*}
    x_\alpha \sim - \sqrt{-\log(4\pi \alpha^2 \log(1/(2\alpha)))}
\end{align*}
as $\alpha \to 0$. We will compute $\CVaR_\alpha(N)$ and $\CVaRv_\alpha(N)$ via the cumulant generating function $\phi$ of a truncated Gaussian
\begin{align*}
    \phi(\theta) &= \log \mathbb{E}[e^{\theta N} \mid N \le x_\alpha] \\
    &= - \log F_N(x_\alpha)
    + \log \int_{-\infty}^{x_\alpha} \frac{e^{-t^2/2 + \theta t}}{\sqrt{2\pi}}\, dt \\
    &= - \log F_N(x_\alpha)
    + \log\int_{-\infty}^{x_\alpha} \frac{e^{-(t-\theta)^2 /2 + \theta^2/2}}{\sqrt{2\pi}} \, dt \\
    &= - \log F_N(x_\alpha) + \theta^2/2 + \log F_N(x_\alpha-\theta) .
\end{align*}
Differentiating at $\theta = 0$ yields the expressions
\begin{align*}
    \CVaR_\alpha(N) &= \phi'(0) = -\frac{1}{\alpha} f_N(x_\alpha) ,\\
    \alpha \CVaRv_\alpha(N) &= \phi''(0) = 1 + \frac{1}{\alpha} f_N'(x_\alpha) - \frac{1}{\alpha^2} f_N(x_\alpha)^2 \\
    &= 1 - \frac{1}{\alpha} x_\alpha f_N(x_\alpha) - \frac{1}{\alpha^2} f_N(x_\alpha)^2 .
\end{align*}
Since $\alpha = F_N(x_\alpha) = \Gamma(1/2,x_\alpha^2/2)/(2\sqrt{\pi})$, it follows that as $\alpha \to 0$,
\begin{align*}
    \CVaR_\alpha(N) &= x_\alpha \left( 1 + \frac{1}{x_\alpha^2} - \frac{2}{x_\alpha^2} + O(x_\alpha^{-6}) \right), \\
    \CVaRv_\alpha(N) &= \frac{1}{\alpha} + \frac{x_\alpha^2}{\alpha} \left( 1 + \frac{1}{x_\alpha^2} - \frac{2}{x_\alpha^2} + O(x_\alpha^{-6}) \right) \\
    &\qquad - \frac{x_\alpha^2}{\alpha} \left( 1 + \frac{1}{x_\alpha^2} - \frac{2}{x_\alpha^2} + O(x_\alpha^{-6})\right)^2 \\
    &= \frac{1}{\alpha x_\alpha^2} + O(x_\alpha^{-4}).
\end{align*}
As a final example, we can consider the case where $X$ has density $f_X(x)=\beta x^{-1-\beta}1(x\geq 1)$ where $\beta > 0$ (i.e.,~we consider a power law tail). Here, one can compute that for $\beta>2$, $\overline{\CVaRv}_\alpha(X)=\beta(\beta-1)^{-2}(\beta-2)^{-1}\alpha^{-2/\beta}$ which is worse than the decay in the standard normal case and achieves the worst case upper bound on the variance in the $\beta\to 2$ limit.

\section{Relation to brute-force search}
\label{sec:brute_force}

A brute-force search enumerates all $2^n$ candidate solutions and checks which one is optimal.
The sampling overhead of $\sqrt{\gamma}$ on noisy devices can thus be related to brute-force search thereby allowing us to derive a hardware requirements for QAOA.
Assuming, for simplicity, that the probability $p_x$ to sample the optimal solution is close to $1$ we require hardware with $\sqrt{\gamma} < 2^n$.
We can relate this to the layer fidelity to obtain a requirement on hardware quality necessary for potential quantum advantage.
First, we assume that each layer $i$ in a QAOA circuit has the same layer fidelity ${\rm LF}=1/\sqrt{\gamma_i}$. 
As a result the $\gamma$ of the circuit is $\gamma=\prod_{i=1}^{d(n)}\gamma_i=1/{\rm LF}^{2d(n)}$ where $d(n)$ is the depth defined as the number of non-overlapping two-qubit gate layers.
This assumption is reasonable when transpiling QAOA circuits to a line of qubits which requires layers of CNOT gates applied on every other edge~\cite{weidenfeller2022scaling}.
Therefore, the sampling cost to compensate for noise is $1/{\rm LF}^{d(n)}$.
For a line of qubits we may assume that to leading order $d(n)\sim 3n p$.
The factor $3n$ comes from the fact that $n-2$ layers of SWAP gates are needed to implement full connectivity and each SWAP merged with an $R_{ZZ}$ is implemented with three CNOT gates.
Here, $p$ is the number of QAOA layers which is sometimes assumed to grow with the logarithm of problem size, i.e., $p\propto\log(n)$ \cite{Bravyi2019, weidenfeller2022scaling}.
If the sampling overhead should stay below brute-force search we therefore require ${\rm LF}^{-3np}<2^n$ which implies that the layer fidelity must satisfy
\begin{align}
    {\rm LF} > \frac{1}{2^{1/3p}}.
\end{align}
This requirement is only dependent on problem size through the relation between $p$ and $n$.
However, as shown in Ref.~\cite{mckay2023benchmarking} the layer fidelity decreases with the number of qubits in the layer.
If we further assume that layers are dense, i.e., every layer on $n$ qubits consists of approximately $n/2$ CNOT gates, we can compute a corresponding CNOT fidelity as ${\rm LF}^{2/n}$, as well as the corresponding lower bound
\begin{align}
    {\rm LF}^{2/n} > \frac{1}{2^{2/3pn}}.
\end{align}

\section{40-qubit Circuits}
\label{sec:40_qubit_circuits}

The 40-qubit circuits in the main texts are based on those in Ref.~\cite{Sack2023}.
In this work, the authors consider random three-regular graphs transpiled to a line of qubits using a swap network~\cite{weidenfeller2022scaling}.
This results in circuits that alternate only two types of layers of CNOT gates as described in the main text.
Furthermore, the authors carefully chose the decision variable to physical qubit mapping to minimize the number of layers of the swap network.
This method is described in Ref.~\cite{Matsuo2023}.
The code to produce such circuits is available on GitHub~\cite{BestPractices}.
The optimal parameters resulting from the light-cone optimization are given by
$(\gamma_1, \beta_1) = (2.8405, 0.3982)$ for $p=1$ and 
$(\gamma_1, \beta_1, \gamma_2, \beta_2) = (1.1506, 0.3288, 0.1941, 0.6582)$ 
for $p=2$, respectively.

\section{Dynamical Decoupling}
\label{sec:staggered_dd}

Dynamical decoupling (DD) removes an interaction between a system and a bath by inserting pulses~\cite{Viola1998, Zanardi1999, Vitali1999}.
Here, we briefly summarize DD following Ref.~\cite{Ezzell2023}.
Consider a time-independent bath $H_B$ interacting with the system $H_S=H_S^0+H_S^1$ though $H_{SB}$.
Here, $H_S^1$ is an undesired, always-on error term.
The goal of DD is to insert pulses in idle times such the time evolution of the system and bath becomes $U(T)=U_0(T)B(T)$ with $U_0=\exp(-iTH_S^0)\otimes \mathbb{I}_B$ the desired error-free time-evolution and $B(T)$ ideally acts on the bath alone.

Consider a single qubit with $H_B+H_{SB}=\sum_{\alpha=0}^3 \gamma_\alpha\sigma^\alpha\otimes B^\alpha$.
Here, $\gamma_\alpha$ is a coefficient, and $B^\alpha$ is the bath term that couples to the qubit through the  $\sigma^\alpha$ Pauli matrix.
The simplest DD sequence is ${\rm PX}=X - d_\tau  - X - d_\tau$ where $d_\tau$ indicates a delay of duration $\tau$.
Since $X$ anti-commutes with $Y$ and $Z$, the sequence ${\rm PX}$ cancels the $Y\otimes B^Y$ and $Z\otimes B^Z$ system-bath interactions.
The effective error Hamiltonian after a duration $2\tau$ is $H^\text{err}_{\rm PX}=\gamma_x X\otimes B^x+\mathbb{I}_s\otimes \tilde B+\mathcal{O}(\tau^2)$.
Here, we see that $\rm PX$ is not universal since an $X$ error remains. 
Universal decoupling up to first-order is achieved with the $XY4$ sequence
\begin{align}
    {\rm XY4} = Y - d_\tau  - X - d_\tau - Y - d_\tau  - X - d_\tau
\end{align}
which results in the effective error Hamiltonian $H^\text{err}_{\rm XY4}=\mathbb{I}_s\otimes \tilde B+\mathcal{O}(\tau^2)$.

We now consider the two-qubit case.
Two fixed-frequency qubits typically exhibit an undesired $ZZ$-coupling which is effectively suppressed with DD~\cite{Tripathi2022, Mundada2023}.
Simultaneously applying the $\rm PX$ sequence on both qubits cancels unwanted errors arising from $\mathbb{I}\otimes Z$ and $Z\otimes \mathbb{I}$.
However, since simultaneous $X$ pulses commute with $Z\otimes Z$ the unwanted $ZZ$ interactions (i.e. cross-talk, which is common in transmon qubits) are still present.
This is remedied with staggered DD.
We apply the sequence
\begin{align}
    X_1 - d_{\tau} - X_0 - d_{\tau} - X_1 - d_{\tau} - X_0 - d_{\tau}
\end{align}
which staggers two $\rm PX$ sequences. 
Here, $X_i$ is an $X$ gate applied to qubit $i$, which inverts the evolution of $Z_i$ and $Z_1\otimes Z_0$.
In total, the evolution of single-qubit $Z_i$ errors changes sign twice and the evolution of $ZZ$ errors changes sign four times.
In this work we apply the staggered XY4 sequence \cite{zhou_quantum_2022} (a variant of the staggered XX sequence presented in \cite{Mundada2023}) to ensure a proper cancellation of two-qubit static cross-talk.
The staggered XY4 sequence we employ is defined by $Y_0 - d_{\tau} - Y_1 - d_{\tau} - X_0 - d_{\tau} - X_1 - d_{\tau} - Y_0 - d_{\tau} - Y_1 - d_{\tau} - X_0 - d_{\tau} - X_1 - d_{\tau}$.
As discussed above, it is universal for single-qubit terms and will cancel the static $ZZ$ cross-talk between qubits.

\section{127-qubit QAOA Circuits}
\label{sec:127_qubit_circuits}

\begin{figure*}[t!]
    \centering
    \includegraphics[width=2\columnwidth]{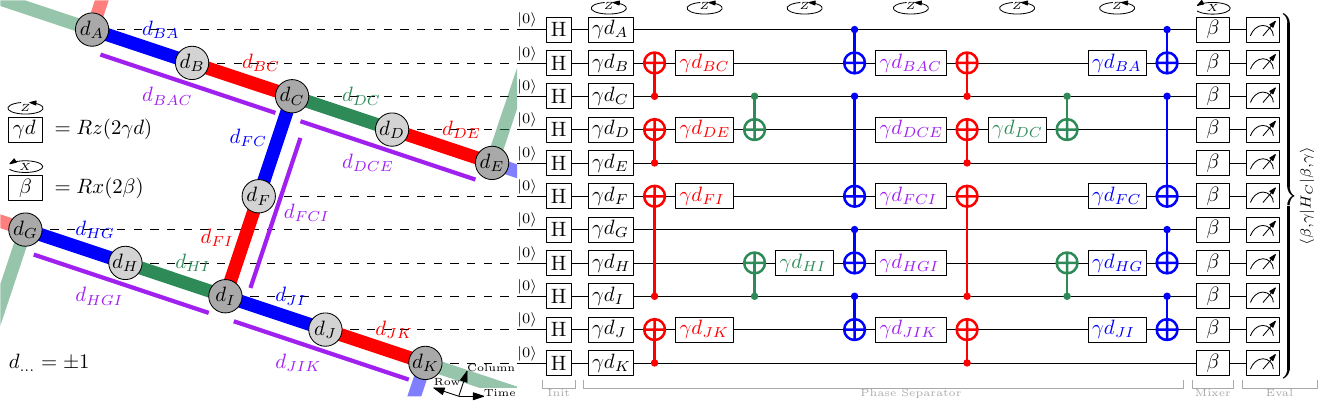}
    \caption{\textbf{From Refs.}~\cite{pelofske2023qavsqaoa,pelofske2023short}: Diagram of a heavy-hex graph compatible $p=1$ QAOA circuit for sampling heavy-hex compatible higher order Ising models (specifically cubic terms centered on degree 2 nodes). Left hand side of the figure shows the 3-edge-coloring and the bipartition of the graph, and the cubic terms are denoted by the adjacent purple lines next to the hardware graph. The right hand side of the figure shows the corresponding QAOA circuit for this sub-component of the heavy-hex graph (which can be extended arbitrarily to a large heavy-hex graph, and to higher $p$). The cubic terms are addressed by a single layer of Rx rotation gates (shown in purple), and the total CNOT depth per $p$ is always $6$. Following the phase separator, the transverse field mixer is applied and then the state of all qubits are measured after $p$ rounds have been applied. }
    \label{fig:heavy_hex_QAOA_circuit}
\end{figure*}

In Figure~\ref{fig:heavy_hex_QAOA_circuit}, taken from Refs.~\cite{pelofske2023qavsqaoa,pelofske2023short}, we briefly discuss the optimized circuits for the 127-qubit higher-order instances to have a self-contained description. This illustrates that all 2-qubit gates needed for the implementation of $e^{-i\gamma H}$ can be scheduled in just 3 different layers of non-overlapping CNOT gates. In each QAOA round $p$, each layer is used once to compute and once to uncompute $ZZ$ and $ZZZ$ parity values, for an overall CNOT depth of $6p$. The exact values of the heuristically computed, using parameter transfer, QAOA angles that give a strictly increasing expectation value as $p$ increases up to $5$ are given in Ref.~\cite{heavy_hex_QAOA_parameter_transfer2023}.

\bibliographystyle{arxiv_no_months}
\bibliography{references}

\end{document}